
\documentclass[onecolumn,journal]{IEEEtran}
\ifCLASSINFOpdf
\else
\fi

\usepackage[T1]{fontenc}
\usepackage[latin9]{inputenc}
\usepackage{color}
\usepackage{array}
\usepackage{float}
\usepackage{mathrsfs}
\usepackage{mathtools}
\usepackage{amsthm}
\usepackage{amstext}
\usepackage{amssymb}
\usepackage{stmaryrd}
\usepackage{graphicx}
\usepackage{pgfplots}

\makeatletter



\theoremstyle{plain}
\newtheorem{thm}{\protect\theoremname}
\theoremstyle{plain}

\theoremstyle{plain}
\newtheorem{cor}[thm]{\protect\corollaryname}
\theoremstyle{plain}
\newtheorem{lem}[thm]{\protect\lemmaname}
\theoremstyle{definition}
\newtheorem{example}[thm]{\protect\examplename}
\theoremstyle{definition}

\theoremstyle{plain}
\newtheorem{remark}[thm]{\protect\remarkname}

\AtBeginDocument{
	
}

\makeatother

  \providecommand{\corollaryname}{Corollary}
  \providecommand{\examplename}{Example}
  \providecommand{\lemmaname}{Lemma}
  \providecommand{\propositionname}{Proposition}
  \providecommand{\theoremname}{Theorem}
  \providecommand{\definitionname}{Definition}
  \providecommand{\remarkname}{Remark}


\newcommand{\ceil}[1]{{\left\lceil {#1} \right\rceil}}
\newcommand{\floor}[1]{{\left\lfloor {#1} \right\rfloor}}

\newcommand{\Bgk}[1]{{\Bigg( {#1} \Bigg) }}
\hyphenation{op-tical net-works semi-conduc-tor}

\begin{document}

%
\title{Weierstrass pure gaps from a Quotient
	of the Hermitian Curve	}
%
%
%

\author{Shudi~Yang and~Chuangqiang~Hu   
\thanks{S. Yang is with the School of Mathematical
	Sciences, Qufu Normal University, Shandong 273165, P.R.China. \protect\\
		\quad C. Hu is with the School of Mathematics, Sun Yat-sen University, Guangzhou 510275, P.R.China.\protect\\	
	\protect\\
	E-mail: huchq@mail2.sysu.edu.cn,~{yangshd3@mail2.sysu.edu.cn}} \protect\\
\thanks{Manuscript received *********; revised ********.}
}

\maketitle

\begin{abstract}
 In this paper, by employing the results over Kummer extensions, we give an arithmetic characterization of pure gaps at many totally ramified places over the quotients of Hermitian curves, including the well-studied Hermitian curves as special cases. The cardinality of these pure gaps is explicitly investigated. In particular, the numbers of gaps and pure gaps at a pair of distinct places are determined precisely, which can be regarded as an extension of the previous work by Matthews (2001) considered Hermitian curves. Additionally, some concrete examples are provided 
 to illustrate our results.

\end{abstract}

\begin{IEEEkeywords}
Hermitian curve, Kummer extension, Weierstrass semigroup, Weierstrass pure gap
\end{IEEEkeywords}

%
\IEEEpeerreviewmaketitle

\section{Introduction}
%
%
%
%

\IEEEPARstart{I}{n} \cite{goppa1977codes,goppa1981codes} Goppa constructed algebraic geometric codes (AG codes for short) from several rational places by using algebraic curves, which led to an important research line in coding theory. Nowadays a great deal of works are devoted to determining or improving the parameters of AG codes, see \cite{garcia1993consecutive,Maharaj2005riemann,maharaj2006floor,Homma2001Goppa,carvalho2005goppa} and the references therein. 

Weierstrass semigroups and pure gaps are of significant uses in the construction and analysis of AG codes 
 for their applications in obtaining codes with good parameters (see~\cite{HuYang2016Kummer,Masuda2}).  
In \cite{garcia1993consecutive,garcia1992goppa}, Garcia, Kim and Lax improved the Goppa bound using the arithmetical structure of the Weierstrass gaps at only one place.
Homma and Kim~\cite{Homma2001Goppa} introduced the concept of pure gaps and demonstrated a similar result 
 for a pair of places. 
And this was generalized to several places by Carvalho and Torres in~\cite{carvalho2005goppa}.

Weierstrass semigroups and gaps over specific Kummer extensions were well-studied in the literature. For instance, the authors of~\cite{KN,KN1,X} computed the Weierstrass gaps and improved the parameters of one-point AG codes from Hermitian curves.
Matthews~\cite{matthews2004weierstrass} investigated the Weierstrass semigroup of any collinear places on a Hermitian curve. In~\cite{matthews2005weierstrass}, Matthews generalized the results of~\cite{carvalho2005goppa,matthews2004weierstrass} by determining the Weierstrass semigroup of
arbitrary rational places on the quotient of the Hermitian curve defined by the equation $ y^m=x^q + x   $ over
$ \mathbb{F}_{q^2} $ where $ q $ is a prime power and $  m>2 $ is a divisor of $ q + 1 $.
For general Kummer extensions, the authors in \cite{Miriam2015,Masuda2} recently described the Weierstrass semigroups and gaps
at one place and two places. Bartoli, Quoos and Zini \cite{Bartoli2016} gave a criterion to find pure gaps at many places and presented families of pure gaps.  In \cite{Yang2017Wsemi,HuYang2016Kummer}, Hu and Yang explicitly determined the Weierstrass semigroups and pure gaps at many places, and constructed AG codes with excellent parameters. However, little is known about the numbers of gaps and pure gaps from algebraic curves. Kim \cite{kim} obtained lower bounds and upper bounds on the cardinality of gaps at two places, which was pushed forward by Homma \cite{homma1996weierstrass} to give the exact expression. Applying Homma's theory \cite{homma1996weierstrass}, the authors in \cite{matthews2001weierstrass,Bartoli2016} counted the gaps and pure gaps at two places on Hermitian curves and specific Kummer extensions, respectively.  

In this paper, we will study Weierstrass pure gaps of any $ n $-tuple rational totally ramified points on the quotients of the Hermitian curves, containing Hermitian curves as special cases. To be precise, we first focus our attention on characterizing the pure gaps over Hermitian curves and counting their total number. Then the number of pure gaps on the quotients of the Hermitian curves is determined. Besides, we explicitly present the numbers of gaps and pure gaps at two distinct rational points, extending the results on Hermitian curves maintained by Matthews \cite{matthews2001weierstrass}.

The remaider of the paper is organized as follows. In Section~\ref{sec:prelimi} we briefly recall some notations and preliminary results over arbitrary function fields. 
Section~\ref{sec:Hermitian} focuses on the pure gaps at many points over Hermitian curves, where the cardinality of pure gap set is determined explicitly through their arithmetic characterization.
Finally, in Section~\ref{sec:Quo}, we consider the quotient of the Hermitian curves and provide some examples by using our main results.

\section{Preliminaries}\label{sec:prelimi}

In this section, we introduce notations and present some basic facts on the  Weierstrass semigroups and gaps at distinct rational places over arbitrary function fields.

Let $ q  $ be a power of a prime $p$ and $ \mathbb{F}_{q} $ be a finite field with $ q $ elements.  We denote by $ F $ a function field with genus $ g $ over $ \mathbb{F}_q$ and by
$ \mathbb{P}_F $ the set of places of $ F $. The free abelian group generated by the places of $ F $ is denoted by $ \mathcal{D}_F $, whose element is called a divisor. Assume that  $ D=\sum_{P\in {\mathbb{P}_F}} n_P P$ is a divisor such that almost all $ n_P=0 $, then the degree of $ D $ is $ \deg(D)= \sum_{P\in \mathbb{P}_F} n_P $.
For a function $ f \in F $, the divisor of poles of $ f $ will be denoted by $ (f)_{\infty} $.

We introduce some notations concerning the Weierstrass semigroups. Given $ l $ distinct rational places of $ F $, named $ Q_1,\cdots, Q_l $, the Weierstrass semigroup
$ H(Q_1,\cdots, Q_l) $ is defined by
\[
\Big\{(s_1,\cdots, s_l)\in \mathbb{N}_0^l~\Big|~\exists f\in F~ \text{with}~ (f)_{\infty}=\sum_{i=1}^l s_i Q_i  \Big\},
\]
and the Weierstrass gap set $  G(Q_1,\cdots, Q_l)  $ is defined by $ \mathbb{N}_0^l \backslash H(Q_1,\cdots, Q_l) $, where $ \mathbb{N}_0 := \mathbb{N}\cup \{0\} $ denotes the set of nonnegative integers.

An important subset of the Weierstrass gap set is the set of pure gaps.
Homma and Kim~\cite{Homma2001Goppa} introduced the concept of pure gap set at a pair of rational places. Carvalho and Torres~\cite{carvalho2005goppa} generalized this notion to several rational places, denoted by $ G_0(Q_1,\cdots, Q_l) $, which is given by
\begin{align*}
\Big\{&(s_1,\cdots,s_l)\in \mathbb{N}^l~\Big|~\ell(G) = \ell(G -Q_j ) ~\text{for}~1\leqslant j \leqslant l, ~\text{where}~G=\sum_{i=1}^l s_iQ_i \Big\},
\end{align*}
where $ \ell(G) $ is the dimension of the Riemann-Roch space $ \mathcal{L}(G) $. In addition, they showed that $ (s_1,\cdots,s_l)  $ is a pure gap at $ (Q_1,\cdots, Q_l) $ if and only if
\begin{align*}
\ell(s_1Q_1+\cdots+s_l Q_l)=\ell((s_1-1)Q_1+\cdots+(s_l-1) Q_l).
\end{align*}

Let $ m\geqslant 2  $ be an integer coprime to $ q $. A Kummer extension is a field extension $ F/ \mathbb{F}_q(x)$ defined by an eqution $ y^m=f(x)^\lambda $, where $ f(x) =\prod_{i=1}^{r}(x-\alpha_i)\in \mathbb{F}_q(x) $, $ \gcd(m,r \lambda)=1 $ and the $ \alpha_i $'s are pairwise distinct elements in $ \mathbb{F}_q $.
 Recall that the rational function field $ F  $ has
 genus $ g= (r-1)(m-1)/2$. Let $ P_1,\cdots,P_r $ be the places associated to the zeros of $ x-\alpha_1,\cdots,x-\alpha_r $, respectively, and $ P_{\infty} $ be the unique place at infinity. It follows from~\cite{stichtenoth2009algebraic,niederreiter2001rational} that they are totally ramified in this extension. We will often make use of the following result
 by Hu and Yang \cite{HuYang2016Kummer}, where they built a characterization of the pure gaps at many places of a Kummer extension. 
\begin{thm}[\cite{HuYang2016Kummer}, Theorem 3]\label{thm:Weier}
	\label{lem:Weierstsemi2}
	Let $ P_1,\cdots,P_n $ and $ P_{\infty}  $ be totally ramified places over a Kummer extension defined above, where $ 2 \leqslant n \leqslant r $. The following assertions hold.
	\begin{enumerate}
		\item The pure gap set $ G_0(P_1,\cdots,P_n)$ is given by
		\begin{align*}
		&\Big\{(t_1,\cdots,t_n)\in \mathbb{N}^n~\Big |
		~ m \sum\limits_{s=1 \atop s\neq k}^n\left\lceil \dfrac{t_k-t_s}{m}\right\rceil \\
		&+ m (r-n) \left\lceil \dfrac{t_k}{m}\right\rceil
		> rt_k
		\textup{ for all } 1\leqslant k \leqslant n \Big\}.
		\end{align*}
		\item The pure gap set $	G_0(P_1,\cdots,P_{n-1},P_{\infty}) $ is given by
		\begin{align*}
		&\Big \{(t_1,\cdots,t_{n-1},t_n)\in \mathbb{N}^{n}~\Big |
		~ m \sum\limits_{i=2 }^{n-1}\left\lceil \dfrac{-at_n-t_i}{m}\right\rceil \\
		& 	+ m (r-n+1) \left\lceil \dfrac{-at_n}{m}\right\rceil ~ > t_1 + (a+bm)t_n, \\
		&~ m \sum\limits_{i=1 \atop i\neq j}^{n-1}\left\lceil \dfrac{t_j-t_i}{m}\right\rceil
		+ m (r-n+1) \left\lceil \dfrac{t_j}{m}\right\rceil
		> t_n+ rt_j
		\textup{ for all } 1\leqslant j \leqslant n-1
		\Big\} ,
		\end{align*}
		where $ a,b $ are integers satisfying $ ar+ bm=1 $.
	\end{enumerate}
\end{thm}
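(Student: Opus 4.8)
The plan is to reduce the pure-gap condition to a dimension count of Riemann--Roch spaces and then exploit the Galois structure of the Kummer extension. By the criterion recalled above, $(t_1,\dots,t_n)$ is a pure gap exactly when $\ell(G)=\ell(G-\sum P)$, where $G$ is the divisor in question and the sum runs over all places in its support (the finite places in part~(1), and the finite places together with $P_\infty$ in part~(2)). So I would first compute $\ell(D)$ for an arbitrary divisor $D$ supported on the totally ramified places $P_1,\dots,P_r,P_\infty$. Writing $z=\sum_{j=0}^{m-1} g_j(x)\,y^j$ with $g_j\in\mathbb{F}_q(x)$, the key observation is residue separation: at each totally ramified place the valuations $v(g_j y^j)$ lie in pairwise distinct classes modulo $m$, since $v_{P_k}(y)=\lambda$ and $v_{P_\infty}(y)=-\lambda r$ while $\gcd(m,r\lambda)=1$. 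Hence no cancellation occurs and $z\in\mathcal{L}(D)$ if and only if each $g_j y^j\in\mathcal{L}(D)$, regularity of the $g_j$ off the branch locus following by averaging $z$ over the cyclic Galois group. Therefore $\ell(D)=\sum_{j=0}^{m-1} d_j$, where each $d_j$ counts rational functions on the line with prescribed pole orders; by Riemann--Roch on the rational function field this is $d_j=\max\{0,\,A_j+B_j+1\}$, with $A_j=\sum_{k=1}^r\floor{(t_k+j\lambda)/m}$ the total finite pole order (taking $t_k=0$ outside the support) and $B_j=\floor{(t_\infty-j\lambda r)/m}$ the allowed pole order at infinity.

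Next I would feed this into the pure-gap criterion. Since shrinking $D$ can only decrease each $d_j$, the equality $\ell(G)=\ell(G-\sum P)$ holds if and only if every $d_j$ is preserved. Passing from $G$ to $G-\sum P$ lowers $A_j+B_j$ by an integer $\delta_j\ge 0$ equal to the number of support places whose pole bound jumps, and a short case analysis of $\max\{0,\,\cdot\,\}$ shows that $d_j$ is unchanged precisely when $\delta_j=0$ or $A_j+B_j<0$. Thus the tuple is a pure gap if and only if $A_j+B_j<0$ for every \emph{critical} residue $j$ (those with $\delta_j\ge 1$). The critical residues are exactly those forced by a support place: a finite place $P_k$ gives the unique $j_k$ with $j_k\lambda\equiv -t_k\pmod m$, and in part~(2) the place $P_\infty$ gives the unique $j_\infty$ with $j_\infty\lambda r\equiv t_\infty\pmod m$, well defined since $\gcd(m,r\lambda)=1$. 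Each critical residue yields one inequality, matching the one-inequality-per-index shape of both parts of the statement.

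Finally I would translate the inequality $A_j+B_j<0$ at each critical residue into the displayed ceiling inequalities. For a finite place, writing $j_k\lambda+t_k=m\,w_k$ and expanding via $\floor{(t_s+j_k\lambda)/m}=w_k-\ceil{(t_k-t_s)/m}$ and $\floor{j_k\lambda/m}=w_k-\ceil{t_k/m}$, the sum over the $r$ branch points produces $r\,w_k$, which cancels exactly against the corresponding term of $B_{j_k}$; what remains, after converting the resulting floor inequality into a strict inequality over the rationals and clearing $m$, is precisely the asserted inequality for index $k$. For $P_\infty$ the same expansion using $a\equiv r^{-1}\pmod m$ (from $ar+bm=1$) produces the terms $\ceil{(-at_n-t_i)/m}$ and $\ceil{-at_n/m}$, and absorbing the $i=1$ ceiling term into the right-hand side yields the asymmetric inequality with $t_1$ and $(a+bm)t_n$. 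The main obstacle is exactly this bookkeeping: one must check that both the exponent $\lambda$ and the auxiliary integers $w_k$ (respectively the quotient arising at infinity) cancel completely, so that the final criterion depends only on $m$, $r$, $n$, and the $t_k$, and that the floor-to-ceiling passage and the integer-versus-rational conversion are carried out consistently --- in particular reconciling the symmetric form the computation naturally produces at $P_\infty$ with the $t_1$-singled-out form in which the statement is phrased.
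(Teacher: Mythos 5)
The paper does not actually prove this statement: it is imported verbatim as Theorem~3 of the cited reference \cite{HuYang2016Kummer}, so there is no in-paper proof to compare against. Your proposal reconstructs what is essentially the standard argument behind that result (and behind the related computations in \cite{Miriam2015,Masuda2}): decompose $\mathcal{L}(D)$ for a Galois-invariant divisor $D$ supported on totally ramified places into the eigenspaces $g_j(x)y^j$ (Maharaj-type decomposition, justified by the pairwise distinct valuations of $y^j$ modulo $m$ and averaging over the cyclic Galois group), reduce each summand to Riemann--Roch on the rational function field to get $\ell(D)=\sum_j\max\{0,A_j+B_j+1\}$, and then observe that passing from $G$ to $G-\sum P$ only affects the residues $j$ forced by a support place, so the pure-gap condition is exactly $A_j+B_j<0$ at each critical residue. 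The strategy and all the key reductions are correct, and you have correctly flagged the only genuinely delicate points: the cancellation of the auxiliary quotients $w_k$ (and of $\lambda$), the floor-to-ceiling and integer-versus-rational conversions (which work because the left-hand sides are multiples of $m$), and the absorption of the $i=1$ ceiling term into $t_1+(a+bm)t_n$ at $P_\infty$. Two minor points you should make explicit if you write this out in full: the case where two support places share the same critical residue (so $\delta_j\geqslant 2$) still yields one condition per place, and the two resulting inequalities are then literally equivalent reformulations of the single condition $A_j+B_j<0$; and the eigenspace decomposition needs $m$ invertible in $\mathbb{F}_q$ and the $m$-th roots of unity present, both of which are part of the Kummer hypothesis.
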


The following results are due to van Lint and Wilson \cite{vanLint2001combi} and Homma \cite{homma1996weierstrass}, which will be of use in the sequel.
\begin{lem}[\cite{vanLint2001combi}, Theorem 13.1]\label{lem:solu}
	The number of solutions of the equation
	\begin{equation*}
	x_1+\cdots+x_n=A
	\end{equation*}
	in nonnegative integers is 
	\begin{equation*}
	\left( \begin{array}{c} A+n-1 \\n-1 \end{array} \right).
	\end{equation*}
\end{lem}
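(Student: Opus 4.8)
The plan is to prove this classical "stars and bars" identity by exhibiting an explicit bijection between the solution set and the family of $(n-1)$-element subsets of a set of size $A+n-1$, whose cardinality is $\binom{A+n-1}{n-1}$ by definition.

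First I would encode each nonnegative integer solution $(x_1,\dots,x_n)$ of $x_1+\cdots+x_n=A$ as a word of length $A+n-1$ over a two-symbol alphabet: write $x_1$ copies of a symbol $\star$, then a separator $\mid$, then $x_2$ copies of $\star$, another separator, and so on, ending with $x_n$ copies of $\star$. Such a word contains exactly $A$ stars and $n-1$ bars, hence has length $A+n-1$. Conversely, any arrangement of $A$ stars and $n-1$ bars in a row determines a solution by reading off the number of stars in each of the $n$ blocks delimited by the bars, where empty blocks correspond to the values $x_i=0$.

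Then I would make this correspondence precise as a bijection onto subsets: an arrangement is completely determined by the set of positions occupied by the $n-1$ bars among the $A+n-1$ slots, so the map sending a solution to this set of bar-positions lands in the collection of $(n-1)$-subsets of $\{1,\dots,A+n-1\}$. Concretely, I would set $y_i = x_1+\cdots+x_i+i$ for $1\le i\le n-1$, verify that $1\le y_1<y_2<\cdots<y_{n-1}\le A+n-1$ because each difference $y_{i+1}-y_i=x_{i+1}+1\ge 1$ and $y_{n-1}=A-x_n+(n-1)\le A+n-1$, and then exhibit the inverse via $x_1=y_1-1$, $x_i=y_i-y_{i-1}-1$ for $2\le i\le n-1$, and $x_n=(A+n-1)-y_{n-1}$. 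Counting the $(n-1)$-subsets of an $(A+n-1)$-element set then yields exactly $\binom{A+n-1}{n-1}$.

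Since the statement is entirely elementary, there is no genuinely hard step; the only point requiring care is checking that the partial-sum encoding produces a strictly increasing sequence, so that distinct solutions yield distinct subsets and every subset arises from exactly one solution. As an alternative I could argue by induction on $n$: the base case $n=1$ is immediate since $\binom{A}{0}=1$, and the inductive step fixes $x_n=k$ and sums $\sum_{k=0}^{A}\binom{A-k+n-2}{n-2}$, which collapses to $\binom{A+n-1}{n-1}$ by the hockey-stick identity. I would nevertheless present the bijective proof as the main argument, since it is self-contained and does not require invoking a further combinatorial identity.
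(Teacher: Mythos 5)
Your proof is correct: the stars-and-bars encoding via the partial sums $y_i=x_1+\cdots+x_i+i$ is the standard bijection onto $(n-1)$-subsets of $\{1,\dots,A+n-1\}$, and your verification of strict monotonicity and of the explicit inverse is exactly what is needed. Note, however, that the paper itself gives no proof of this lemma --- it is imported verbatim as Theorem 13.1 of van Lint and Wilson --- so there is no in-paper argument to compare against; your self-contained bijective proof (and the inductive alternative via the hockey-stick identity) simply supplies what the paper delegates to the reference.
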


\begin{thm}[\cite{homma1996weierstrass}, Theorem 1]\label{thm:Gap_2point}
	Let $ Q_1, Q_2 $ be any two distinct points on a smooth curve of genus $ g >1$. Then the number of gaps at $ Q_1, Q_2 $ is
	\begin{align*}
	\#G(Q_1,Q_2)=\sum_{\alpha_1\in G(Q_1) }\alpha_1 
	+ \sum_{\alpha_2\in G(Q_2) }\alpha_2 
	- \#G_0(Q_1,Q_2).
	\end{align*}
\end{thm}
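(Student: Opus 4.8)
The plan is to work directly with the finite two-point gap set $G(Q_1,Q_2)=\mathbb{N}_0^2\setminus H(Q_1,Q_2)$, count it by inclusion--exclusion, and match the three resulting terms to the three terms in the claimed identity. For $(a,b)\in\mathbb{N}_0^2$ I write $h(a,b)=\ell(aQ_1+bQ_2)-\ell((a-1)Q_1+bQ_2)$ and $v(a,b)=\ell(aQ_1+bQ_2)-\ell(aQ_1+(b-1)Q_2)$, each lying in $\{0,1\}$. First I would record the criterion that $(a,b)\in H(Q_1,Q_2)$ exactly when $h(a,b)=v(a,b)=1$: a function with pole divisor precisely $aQ_1+bQ_2$ is one lying in $\mathcal{L}(aQ_1+bQ_2)$ but outside the two subspaces $U_1=\mathcal{L}((a-1)Q_1+bQ_2)$ and $U_2=\mathcal{L}(aQ_1+(b-1)Q_2)$, and the point count $\#(U_1\cup U_2)=\#U_1+\#U_2-\#(U_1\cap U_2)$ shows such a function exists precisely when both subspaces are proper. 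Hence $(a,b)$ is a gap iff $h(a,b)=0$ or $v(a,b)=0$, and Riemann--Roch makes this set finite. Inclusion--exclusion then gives
\[
\#G(Q_1,Q_2)=\#\{h=0\}+\#\{v=0\}-\#\{h=0\text{ and }v=0\},
\]
all three counts taken over $\mathbb{N}_0^2$.

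Next I would evaluate the first two terms. Fixing $b$ and telescoping the jumps $h(a,b)$ over $a$, together with $\ell(aQ_1+bQ_2)=a+b+1-g$ for large $a$, the number of $a\ge 0$ with $h(a,b)=0$ equals $g-b+\ell(bQ_2-Q_1)$. Since the constant functions lie in $\mathcal{L}(bQ_2)$ but not in $\mathcal{L}(bQ_2-Q_1)$, one has $\ell(bQ_2-Q_1)=\ell(bQ_2)-1$, and Riemann--Roch rewrites this row count as $\ell(K-bQ_2)$. Summing over $b$ and using the companion telescoping identity $\sum_{b\ge 0}\ell(K-bQ_2)=\sum_{\beta\in G(Q_2)}\beta$ (the number of gaps of $Q_2$ that are at least $j$ equals $\ell(K-(j-1)Q_2)$), I obtain $\#\{h=0\}=\sum_{\beta\in G(Q_2)}\beta$; by symmetry $\#\{v=0\}=\sum_{\alpha\in G(Q_1)}\alpha$.

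It remains to identify the overlap $\{h=0\text{ and }v=0\}$ with the pure gap set, which I expect to be the main obstacle. A point on either axis is excluded because the constants force $h(0,b)=v(a,0)=1$, so the overlap lies in $\mathbb{N}^2$. On $\mathbb{N}^2$ the decisive input is the identity $\mathcal{L}((a-1)Q_1+(b-1)Q_2)=\mathcal{L}((a-1)Q_1+bQ_2)\cap\mathcal{L}(aQ_1+(b-1)Q_2)$, obtained by comparing valuations at $Q_1$ and $Q_2$. If $h(a,b)=v(a,b)=0$ then both spaces on the right equal $\mathcal{L}(aQ_1+bQ_2)$, forcing $\ell((a-1)Q_1+(b-1)Q_2)=\ell(aQ_1+bQ_2)$, i.e. a pure gap; the converse follows by running the dimension along the two monotone lattice paths from $(a-1,b-1)$ to $(a,b)$, each step of which must then be a non-jump. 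Thus $\#\{h=0\text{ and }v=0\}=\#G_0(Q_1,Q_2)$.

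Substituting the three evaluations into the inclusion--exclusion formula yields the claim. The genuinely delicate points are the clean matching of the overlap term with $G_0$ through the intersection identity above, and the handling of a finite constant field in the non-gap criterion, where the usual ``a vector space is not a union of two proper subspaces'' must be replaced by the explicit $\mathbb{F}_q$-point count; the two Riemann--Roch telescopings are routine once the bookkeeping is set up.
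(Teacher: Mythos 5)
The paper does not prove this statement at all: it is imported verbatim as Theorem~1 of Homma's paper \cite{homma1996weierstrass}, so there is no internal proof to compare against. Judged on its own, your argument is correct and complete. The non-gap criterion via the two subspaces $U_1,U_2$ is sound (and you are right that over $\mathbb{F}_q$ one needs the point count $\#(U_1\cup U_2)\leqslant 2q^{d-1}-1<q^d$ rather than the infinite-field slogan); the row count $\#\{a:h(a,b)=0\}=g-b+\ell(bQ_2-Q_1)=\ell(K-bQ_2)$ follows exactly as you telescope it, using that constants separate $\mathcal{L}(bQ_2)$ from $\mathcal{L}(bQ_2-Q_1)$; the identity $\sum_{b\geqslant 0}\ell(K-bQ_2)=\sum_{\beta\in G(Q_2)}\beta$ is the standard count of gaps $\geqslant j$; and the identification of $\{h=0,\,v=0\}$ with $G_0(Q_1,Q_2)$ via $\mathcal{L}((a-1)Q_1+(b-1)Q_2)=\mathcal{L}((a-1)Q_1+bQ_2)\cap\mathcal{L}(aQ_1+(b-1)Q_2)$, together with the observation that the axes are excluded because $h(0,b)=v(a,0)=1$, matches precisely the paper's definition of pure gaps (which, as Section~\ref{sec:prelimi} records, is equivalent to $\ell(s_1Q_1+s_2Q_2)=\ell((s_1-1)Q_1+(s_2-1)Q_2)$). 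Your route differs from Homma's original one, which analyzes the structure of $H(Q_1,Q_2)$ through a bijection between the gap sequences at the two points; the inclusion--exclusion on the jump indicators $h,v$ is a more direct way to obtain just the counting identity, at the cost of not yielding the finer structural information about the semigroup that Homma's method provides.
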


\section{Pure gaps from a Hermitian Curve}\label{sec:Hermitian}
In this section, we consider a Hermitian curve defined by
$ y^{q+1}=x^q+x $ over $ \mathbb{F}_{q^2} $ and calculate the cardinality of the pure gaps at many places.

 The following theorem is an immediate consequence of Theorem \ref{lem:Weierstsemi2} by taking 
  $ m=q+1 $ and $ r=q $.
\begin{thm}
	For distinct rational places $ P_1 ,\cdots,  P_n $ on the Hermitian curve
	$ y^{q+1}=x^q+x $ over $ \mathbb{F}_{q^2} $, 
\begin{align}\label{eq:HermPurgap}
G_0(P_1,\cdots,P_n)=&\Big\{(t_1,\cdots,t_n)\in \mathbb{N}^n~\Big |
~ (q+1) \sum\limits_{s=1 \atop s\neq k}^n\left\lceil \dfrac{t_k-t_s}{q+1}\right\rceil \nonumber \\
&+ (q+1) (q-n) \left\lceil \dfrac{t_k}{q+1}\right\rceil
> qt_k
\textup{ for all } 1\leqslant k \leqslant n \Big\}.
\end{align}
\end{thm}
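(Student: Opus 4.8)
The plan is to verify that the Hermitian curve $y^{q+1}=x^q+x$ is itself a Kummer extension of the required form and then specialize Theorem~\ref{thm:Weier}. First I would rewrite the defining equation as $y^{q+1}=f(x)^{\lambda}$ where $f(x)=x^q+x=\prod_{\alpha}(x-\alpha)$ and $\lambda=1$. The key observation is that $x^q+x$ factors completely over $\mathbb{F}_{q^2}$ into $q$ distinct linear factors: its roots are precisely the elements $\alpha\in\mathbb{F}_{q^2}$ satisfying $\alpha^q=-\alpha$, equivalently the solutions of the trace-like condition $\mathrm{Tr}_{\mathbb{F}_{q^2}/\mathbb{F}_q}(\alpha)=0$, which form a one-dimensional $\mathbb{F}_q$-subspace of cardinality exactly $q$. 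Hence $f(x)$ has degree $r=q$ with $q$ pairwise distinct roots in $\mathbb{F}_{q^2}$, and we set $m=q+1$.

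The next step is to check that the coprimality hypotheses of the Kummer setup are met, so that Theorem~\ref{thm:Weier} genuinely applies with these parameters. Here $m=q+1$ and $r=q$, so $\gcd(m,q)=\gcd(q+1,q)=1$, and with $\lambda=1$ we get $\gcd(m,r\lambda)=\gcd(q+1,q)=1$ as required. Since $m=q+1$ is coprime to the characteristic-related quantity $q$, the extension is indeed a Kummer extension to which the earlier theorem applies, and the places $P_1,\dots,P_n$ associated to the zeros of $x-\alpha_i$ are totally ramified. At this point the genus computation $g=(r-1)(m-1)/2=(q-1)q/2$ recovers the classical Hermitian genus, serving as a sanity check that the parameters are correct.

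With the identification $m=q+1$, $r=q$ in hand, the final step is purely a substitution into part~(1) of Theorem~\ref{thm:Weier}. Replacing $m$ by $q+1$ turns the leading coefficient $m$ into $q+1$, the term $m(r-n)$ into $(q+1)(q-n)$, the ceiling denominators $m$ into $q+1$, and the right-hand side $rt_k$ into $qt_k$. This yields exactly the displayed description~\eqref{eq:HermPurgap} of $G_0(P_1,\dots,P_n)$ for every $k$ with $1\leqslant k\leqslant n$.

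The only genuine content is the factorization claim, so the main obstacle I anticipate is justifying that $x^q+x$ splits into $q$ distinct linear factors over $\mathbb{F}_{q^2}$ and that the relevant $\alpha_i$ lie in the base field used by the Kummer construction; everything after that is a mechanical specialization of an already-proved theorem. I would therefore spend the bulk of the argument confirming $r=q$ and the coprimality $\gcd(q+1,q)=1$, after which the conclusion follows immediately from Theorem~\ref{thm:Weier}(1).
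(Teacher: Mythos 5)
Your proposal is correct and follows exactly the paper's route: the paper also obtains this theorem as an immediate specialization of Theorem~\ref{thm:Weier}(1) with $m=q+1$ and $r=q$. Your additional verification that $x^q+x$ splits into $q$ distinct linear factors over $\mathbb{F}_{q^2}$ (as the kernel of the trace) and that $\gcd(q+1,q)=1$ is a sound and slightly more careful justification of the same substitution.
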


Before we count all the pure gaps at many places, we should give an arithmetic characterization of these pure gaps.
Consider one of the inequalities in~\eqref{eq:HermPurgap}, say
\begin{equation}\label{eq:oneconditon}
(q+1)\Bgk{\ceil{\dfrac{t_1-t_2}{q+1}} +\cdots+ \ceil{\dfrac{t_1-t_n}{q+1}}} +(q+1)(q-n)\ceil{\dfrac{t_1}{q+1}} >qt_1 .
\end{equation}
Dividing $ q+1 $ in both sides of \eqref{eq:oneconditon}
gives that
\begin{equation*} 
\ceil{\dfrac{t_1-t_2}{q+1}} +\cdots+ \ceil{\dfrac{t_1-t_n}{q+1}} +(q-n)\ceil{\dfrac{t_1}{q+1}} >\dfrac{q}{q+1}t_1 ,
\end{equation*} 
which means that
\begin{equation}\label{eq:oneconditon2}
\ceil{\dfrac{t_1-t_2}{q+1}} +\cdots+ \ceil{\dfrac{t_1-t_n}{q+1}} +(q-n)\ceil{\dfrac{t_1}{q+1}} \geqslant \ceil{\dfrac{q}{q+1}t_1} .
\end{equation} 
Putting $ t_k := (q+1)i_k + j_k $  where $ i_k \geqslant 0 $ and $ 1\leqslant j_k \leqslant q+1 $ for $ k=1,\cdots,n $, we obtain
\begin{equation}\label{eq:con1}
\ceil{\dfrac{j_1-j_2}{q+1}} +\cdots+ \ceil{\dfrac{j_1-j_n}{q+1}} +(q-n)\ceil{\dfrac{j_1}{q+1}} \geqslant A+\ceil{\dfrac{q}{q+1}j_1} 
\end{equation}
by~\eqref{eq:oneconditon2}. Here we write $ A:= \sum_{k=1}^{n} i_k $ for simplicity.
We claim that $ j_1\neq q+1 $. If not, we must have 
\begin{equation*}
(n-1)+(q-n)  \geqslant A+q
\end{equation*}
by~\eqref{eq:con1}. This gives a contradiction as $   A \geqslant 0 $ by definition. 
Thus $ 1 \leqslant j_1 \leqslant q $. In the same manner as above we have $ 1 \leqslant j_k \leqslant q $ for $ k=1,\cdots,n $. Since $ 1\leqslant j_1 \leqslant q $, it follows from~\eqref{eq:con1} that
\begin{equation*} 
\ceil{\dfrac{j_1-j_2}{q+1}} +\cdots+ \ceil{\dfrac{j_1-j_n}{q+1}} -j_1  \geqslant A-(q-n).
\end{equation*}
Moreover, it is easily verified that
\begin{equation*} 
\ceil{\dfrac{j_1-j_2}{q+1}} +\cdots+ \ceil{\dfrac{j_1-j_n}{q+1}} -j_1  \leqslant  (n-1)-2. 
\end{equation*}    Thus $ (n-1)-2 \geqslant A-(q-n) $, and so
 $  0 \leqslant A \leqslant q-3 $. 
Therefore, the pure gap set $ G_0(P_1,\cdots,P_n) $ can be represented as follows:
\begin{equation*}
G_0(P_1,\cdots,P_n) = \cup_{0\leqslant A \leqslant q-3} \Bgk{(q+1)(i_1,\cdots,i_n)+ J_A} ,
\end{equation*} 
where $ A= \sum_{k=1}^{n} i_k $ and the set $ J_A $ is defined by
\begin{align}\label{eq:Bjn}
J_A := 
\Big\{ &(j_1,\cdots,j_n)\in \mathbb{N}^n~\Big | 
~1 \leqslant j_1,\cdots,j_n \leqslant q , \nonumber \\
& \sum\limits_{s=1 \atop s\neq k}^n\ceil{\dfrac{j_k-j_s}{q+1}} -j_k  \geqslant A-(q-n) \textup{ for all } 1\leqslant k \leqslant n
\Big\}.
\end{align}

To calculate the number of lattice points in $ G_0(P_1,\cdots,P_n) $, we come to a characterization of $ J_A $. Without loss of generality, we suppose that $ j_1 \geqslant j_2 \geqslant \cdots \geqslant j_n $.
Note that the inequality 
\begin{align*}
\sum\limits_{s=1 \atop s\neq k}^n\ceil{\dfrac{j_k-j_s}{q+1}} -j_k  \geqslant A-(q-n)
\end{align*}
implies that
\begin{align}\label{eq:jk1}
1 \leqslant j_k \leqslant \sum\limits_{s=1 \atop s\neq k}^n\ceil{\dfrac{j_k-j_s}{q+1}} +(q-n)-A.
\end{align}
Under the assumption that $ j_1 \geqslant j_2 \geqslant \cdots \geqslant j_n $, we observe that   
\begin{align*}
\sum\limits_{s=1 \atop s\neq k}^n\ceil{\dfrac{j_k-j_s}{q+1}} \leqslant (k-1)\times 0+(n-k)\times 1 =n-k.
\end{align*} 
Hence from \eqref{eq:jk1}, we have for $ k=1,\cdots,n $,  
\begin{equation} \label{eq:jk}
1\leqslant j_k \leqslant q-k-A. 
\end{equation} 
It follows that $0 \leqslant A \leqslant q-n-1 $ by taking $ k=n $ in~\eqref{eq:jk}. By setting $ t:=q-A $, we denote
\begin{equation*}  
\mathcal{B}_t := \Big\{(j_1,\cdots,j_n)\Big|~j_1 \geqslant j_2 \geqslant \cdots \geqslant j_n 
\textup{ and } 1 \leqslant j_k \leqslant t-k \textup{ for all } 1\leqslant k \leqslant n \Big\}.
\end{equation*} 
Let $ \sigma $ be a permutation of the set $ \{1,\cdots,n\} $ and define 
\begin{equation*}
	\sigma (j_1,\cdots,j_n) := (j_{\sigma(1)},\cdots,j_{\sigma(n)}).
\end{equation*}
We denote by $ \tau_n $ the collection of all the permutations of the set $ \{1,\cdots,n\} $. Obviously $\# \tau_n=n! $. Since $ j_1,\cdots,j_n $ in the set $ J_A $ of \eqref{eq:Bjn} are symmetric, we can represent  $ J_A $ as the union set
$   \cup_{\sigma\in \tau_n} \sigma( \mathcal{B}_t)$, where 
\begin{equation*}
	\sigma( \mathcal{B}_t)
	:= \Big\{\sigma (j_1,\cdots,j_n) ~ \Big| ~ (j_1,\cdots,j_n)\in  \mathcal{B}_t \Big\}.
\end{equation*}
 Define
\begin{equation}\label{eq:Bt}
B_t:= \Big\{(j_1,\cdots,j_n)\Big|~  1 \leqslant j_k \leqslant t-k \textup{ for all } 1\leqslant k \leqslant n \Big\}.
\end{equation}
In particular, $ B_t=\varnothing $ if $ t\leqslant n $. One sees that the set $   \cup_{\sigma\in \tau_n} \sigma( \mathcal{B}_t)$ is equivalent to the set $   \cup_{\sigma\in \tau_n} \sigma(  B_t)$. In conclusion, we have 
\begin{align*}
J_A=\cup_{\sigma\in \tau_n} \sigma( \mathcal{B}_t)=\cup_{\sigma\in \tau_n} \sigma(B_t).
\end{align*}
Therefore, we can represent $ G_0(P_1,\cdots,P_n) $ as follows:
\begin{equation}\label{eq:G0}
G_0(P_1,\cdots,P_n) = \cup_{0\leqslant A \leqslant q-n-1} \Bgk{ (q+1)(i_1,\cdots,i_n)+  \cup_{\sigma\in \tau_n} \sigma(B_t)},
\end{equation} 
where $ t=q-A $ and $ A= \sum_{k=1}^{n} i_k $ with all $ i_k \geqslant 0 $.

Now we will compute the cardinality of $ G_0(P_1,\cdots,P_n) $. Let $ S_n(t) $ denote the number of lattice points in the union of $  \sigma(B_t) $ for all $ \sigma\in \tau_n $. Our key lemma is the following.
\begin{lem}\label{lem:Sn(t)}
	Set $ S_0(t)= 1 $. For $ 1 \leqslant n <q $, if $ t\geqslant n+1 $, then we have
	\begin{equation}\label{eq:Sn(t)}
	S_n(t) = (t-n )t^{n-1},
	\end{equation}
	and if $ t \leqslant n $,
	\begin{equation*}
	S_n(t) = 0.
	\end{equation*}
\end{lem}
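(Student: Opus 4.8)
The plan is to first replace the symmetrized union by a single inequality on each lattice point, and then to compute $S_n(t)$ by induction on $n$ using a recursion obtained by splitting the valid tuples according to their smallest entry. The induction closes by a binomial identity followed by a telescoping sum, and the closed form $(t-n)t^{n-1}$ drops out.

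First I would establish a membership criterion. A tuple $(a_1,\dots,a_n)$ of positive integers lies in $\bigcup_{\sigma\in\tau_n}\sigma(B_t)$ exactly when some $\sigma$ satisfies $a_i\le t-\sigma(i)$ for all $i$; reading $\sigma$ as an assignment of the distinct ranks $1,\dots,n$ to the coordinates subject to $\sigma(i)\le t-a_i$, this is a system-of-distinct-representatives problem, and Hall's marriage theorem (equivalently, the greedy rule assigning the smallest rank to the largest coordinate, justified by an exchange argument) shows a feasible $\sigma$ exists if and only if the increasing rearrangement $\widetilde a_1\le\cdots\le\widetilde a_n$ satisfies $\widetilde a_i\le u_i$ for every $i$, where $u_i:=(t-n)+(i-1)$. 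Thus $S_n(t)$ is the number of lattice points of $\mathbb{N}^n$ whose increasing rearrangement is dominated by the arithmetic staircase $u_i$. In the degenerate range this is transparent: if $t\le n$ then $u_1=t-n\le 0$ forces $\widetilde a_1\le 0$, impossible for positive entries, so $S_n(t)=0$, matching $B_t=\varnothing$ from \eqref{eq:Bt}; the base cases $S_0(t)=1$ and $S_1(t)=t-1$ are immediate.

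Next I would set up the recursion. Splitting the valid tuples by the value $v=\min_i a_i$ and the number $m\ge 1$ of coordinates equal to $v$, the criterion forces $1\le v\le t-n$, while the $n-m$ remaining entries all exceed $v$. After choosing the $m$ positions carrying the minimum and subtracting $v$ from each of the other entries, a short check against the inequalities $\widetilde a_i\le u_i$ shows that the shifted big entries are precisely a configuration counted by $S_{n-m}(t-v)$: the length drops from $n$ to $n-m$ and the staircase parameter drops from $t$ to $t-v$. This partition is exact, so
\begin{equation*}
S_n(t)=\sum_{v=1}^{t-n}\sum_{m=1}^{n}\binom{n}{m}\,S_{n-m}(t-v).
\end{equation*}
Every term on the right carries an index $n-m\le n-1$, so the inductive hypothesis applies to all of them.

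Finally I would close the induction. Writing $g(w):=(w-n)w^{n-1}$ for the target expression and substituting the known values $S_j(w)=(w-j)w^{j-1}$ for $1\le j\le n-1$ together with $S_0(w)=1$, the inner sum reindexes to $\sum_{j=0}^{n-1}\binom{n}{j}S_j(w)$, and the standard binomial identities $\sum_j\binom{n}{j}w^{j}=(1+w)^n$ and $\sum_j j\binom{n}{j}w^{j-1}=n(1+w)^{n-1}$ collapse it to $g(w+1)-g(w)$ with $w=t-v$. Summing over $v=1,\dots,t-n$ then telescopes to $g(t)-g(n)$, and $g(n)=0$ yields $S_n(t)=g(t)=(t-n)t^{n-1}$, as claimed. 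The main obstacle is the recursion step: one must verify carefully that after removing the minimum the residual entries satisfy exactly the shifted staircase defining $S_{n-m}(t-v)$ --- in particular that the parameter changes are $n\mapsto n-m$ and $t\mapsto t-v$, and that the decomposition over pairs $(v,m)$ is both disjoint and exhaustive. Once the recursion is secured, the binomial collapse and the telescoping are routine.
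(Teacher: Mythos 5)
Your argument is correct, but it runs along a genuinely different decomposition from the paper's. You first convert membership in $\cup_{\sigma\in\tau_n}\sigma(B_t)$ into the sorted domination condition $\widetilde a_i\leqslant (t-n)+(i-1)$ (via Hall/greedy); the paper makes the same identification implicitly when it observes $\cup_{\sigma}\sigma(\mathcal{B}_t)=\cup_{\sigma}\sigma(B_t)$. Where you diverge is the recursion: you peel off the minimum layer, stratifying by the minimum value $v$ and its multiplicity $m$, which yields the double-sum recursion $S_n(t)=\sum_{v=1}^{t-n}\sum_{m=1}^{n}\binom{n}{m}S_{n-m}(t-v)$ and then needs both the binomial collapse to $g(w+1)-g(w)$ and a telescoping sum over $v$ (your verification that the residual entries satisfy the staircase for parameters $(n-m,\,t-v)$ is the crux, and it does check out: $b_j=\widetilde a_{m+j}-v\leqslant ((t-v)-(n-m))+(j-1)$, and $w=t-v\geqslant n$ guarantees the closed form applies to every term of the inner sum). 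The paper instead stratifies by how many coordinates fall in the top band $(t-n,\,t-1]$, obtaining the single-sum recursion $S_n(t)=\sum_{i=0}^{n-1}\binom{n}{i}S_i(n)(t-n)^{n-i}$, in which the subproblem is a rescaled copy with parameters $(i,n)$ and the free factor $(t-n)^{n-i}$ makes the closed form drop out of one application of the binomial theorem. Your route is somewhat more elementary in spirit (strip the minimum, telescope) at the cost of a two-parameter stratification; the paper's buys a shorter algebraic finish. Both are complete proofs of the lemma.
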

\begin{proof}
   It is clear that $ B_t $ of~\eqref{eq:Bt} is empty provided that $ t\leqslant n $, which follows that $ S_n(t) = 0 $. 
   
   Suppose that $ t \geqslant n+1 $ in the reminder of the proof. To get \eqref{eq:Sn(t)}, we proceed by decreasing induction on $ n $.

	For $ n=1 $, $ S_1(t)=t-1 $.
	
	For $ n=2 $, 
	\begin{equation*} 
	B_t = \Big\{(j_1,j_2)\Big| ~ 1 \leqslant j_1 \leqslant t-1, 1 \leqslant j_2 \leqslant t-2 \Big\}.
	\end{equation*}
	To count the lattice points in the set $ \cup_{\sigma\in \tau_2} \sigma(B_t) $, we consider two sets
	\begin{equation*} 
	\varOmega_0= \Big\{(j_1,j_2)\Big| ~ 1 \leqslant j_k \leqslant t-2 \textup{ for } k=1,2 \Big\}
	\end{equation*}
	and
	\begin{equation*} 
	\varOmega_1= \Big\{(j_1,j_2)\Big| ~ j_1=t-1, ~1 \leqslant  j_2 \leqslant t-2  \Big\}.
	\end{equation*}	
	Then $ \cup_{\sigma\in \tau_2} \sigma(B_t) $ becomes $\varOmega_0 \cup (\cup_{\sigma\in \tau_2} \sigma(\varOmega_1))  $ since $ \cup_{\sigma\in \tau_2} \sigma(\varOmega_0)= \varOmega_0  $. Note that the intersection of $\varOmega_0 $  and $  \cup_{\sigma\in \tau_2} \sigma(\varOmega_1) $ is the empty set. Hence
	$ S_2(t)=(t-2)^2+2(t-2)=(t-2)t $, which settles the case where $  n = 2 $ (see Fig.~\ref{fig:Bt}).
	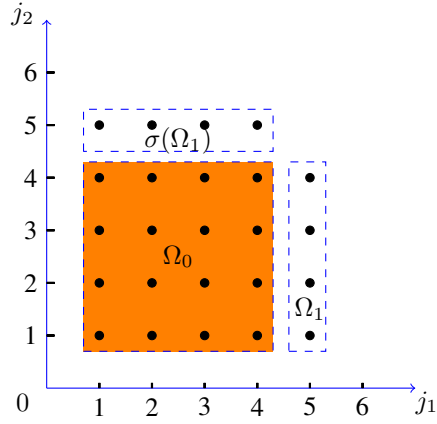
\begin{figure}[H]
		\centering
		\begin{tikzpicture}[scale=0.7]
		\path [fill=orange] (0.7,0.7)--(4.3,0.7)--(4.3,4.3)--(0.7,4.3)--(0.7,0.7);
		\draw [blue,->] (0,0)--(7,0);
		\draw [blue,->] (0,0)--(0, 7);
		\draw [blue,dashed] (0.7,0.7)--(4.3,0.7)--(4.3,4.3)--(0.7,4.3)--(0.7,0.7);
		\draw [blue,dashed] (4.6,0.7)--(5.3,0.7)--(5.3,4.3)--(4.6,4.3)--(4.6,0.7);
        \draw [blue,dashed] (0.7,4.5)--(0.7,5.3)--(4.3,5.3)--(4.3,4.5)--(0.7,4.5);
        
        \draw[color=black] node [yshift=-1.2ex,xshift=1ex] at (7,0) { $ j_1 $ };
        \draw[color=black] node [yshift=0.5ex,xshift=-2ex] at (0,7) { $ j_2 $ };
		\draw[color=black] node [yshift=-1.2ex,xshift=-2ex] at (0,0) { $ 0 $ };
		
    	\draw [thick] (1,0)--(1,.1)  node[yshift=-2ex,xshift=0ex]{1} -- (1,0);
    	\draw [thick] (2,0)--(2,.1) node[yshift=-2ex,xshift=0ex]{2} -- (2,0);
	    \draw [thick] (3,0)--(3,.1) node[yshift=-2ex,xshift=0ex]{3} -- (3,0);
	    \draw [thick] (4,0)--(4,.1) node[yshift=-2ex,xshift=0ex]{4} -- (4,0);
	    \draw [thick] (5,0)--(5,.1) node[yshift=-2ex,xshift=0ex]{5} -- (5,0);
	    \draw [thick] (6,0)--(6,.1) node[yshift=-2ex,xshift=0ex]{6} -- (6,0);

    	\draw [thick] (0,1)--(0.15,1)   node[yshift=0ex,xshift=-2ex]{1} -- (0,1);
	    \draw [thick] (0,2)--(0.15,2) node[yshift=0ex,xshift=-2ex]{2} -- (0,2);
	    \draw [thick] (0,3)--(0.15,3) node[yshift=0ex,xshift=-2ex]{3} -- (0,3);
	    \draw [thick] (0,4)--(0.15,4) node[yshift=0ex,xshift=-2ex]{4} -- (0,4);
	    \draw [thick] (0,5)--(0.15,5) node[yshift=0ex,xshift=-2ex]{5} -- (0,5);
	    \draw [thick] (0,6)--(0.15,6) node[yshift=0ex,xshift=-2ex]{6} -- (0,6);

		\draw[color=black] node at (2.5,2.5) { $ \Omega_0 $ };
		\draw[color=black] node at (5,1.5)    { $ \Omega_1 $ };
		\draw[color=black] node at (2.5,4.7) { $ \sigma(\Omega_1) $ };

		\draw [fill] (1, 1) circle [radius=0.08];
		\draw [fill] (1, 2) circle [radius=0.08];
		\draw [fill] (1, 3) circle [radius=0.08];
		\draw [fill] (1, 4) circle [radius=0.08];
		\draw [fill] (1, 5) circle [radius=0.08];
		\draw [fill] (2, 1) circle [radius=0.08];
		\draw [fill] (2, 2) circle [radius=0.08];
		\draw [fill] (2, 3) circle [radius=0.08];
		\draw [fill] (2, 4) circle [radius=0.08];
		\draw [fill] (2, 5) circle [radius=0.08];
		\draw [fill] (3, 1) circle [radius=0.08];
		\draw [fill] (3, 2) circle [radius=0.08];
		\draw [fill] (3, 3) circle [radius=0.08];
		\draw [fill] (3, 4) circle [radius=0.08];
		\draw [fill] (3, 5) circle [radius=0.08];
		\draw [fill] (4, 1) circle [radius=0.08];
		\draw [fill] (4, 2) circle [radius=0.08];
		\draw [fill] (4, 3) circle [radius=0.08];
		\draw [fill] (4, 4) circle [radius=0.08];
		\draw [fill] (4, 5) circle [radius=0.08];
		\draw [fill] (5, 1) circle [radius=0.08];
		\draw [fill] (5, 2) circle [radius=0.08];
		\draw [fill] (5, 3) circle [radius=0.08];
		\draw [fill] (5, 4) circle [radius=0.08]; 	
		\end{tikzpicture}
		
		\protect\caption{The lattice point set  $ \cup_{\sigma\in \tau_2} \sigma(B_t) $ for $ t=6 $ }
		\label{fig:Bt}
	\end{figure}

    We now proceed by induction on $ n \geqslant 3 $.
	Assume that $ 	S_l(t) = (t-l )t^{l-1} $ holds for all  $ 2 \leqslant l \leqslant n-1 $. For $ l=n $,
	\begin{equation*}
	B_t= \Big\{(j_1,\cdots,j_n)\Big|~  1 \leqslant j_k \leqslant t-k \textup{ for all } 1\leqslant k \leqslant n \Big\}.
	\end{equation*}
	Then the set $ \cup_{\sigma\in \tau_n} \sigma(B_t) $ contains all the $ n $-tuples $ (j_1,\cdots,j_n)$ with $  1 \leqslant j_k \leqslant t-k  $ for $ 1\leqslant k \leqslant n $ and all of their permutations $ \sigma(j_1,\cdots,j_n) $, $ \sigma\in \tau_n $. To count the lattice points in the set $ \cup_{\sigma\in \tau_n} \sigma(B_t) $, we consider the sets
	\begin{align*} 
	\varOmega_i= \Big\{(j_1,\cdots,j_n)\Big|& ~ t-n+1 \leqslant j_k \leqslant t-k  \textup{ for all } 1\leqslant k \leqslant i, \\
	& ~  1 \leqslant j_s \leqslant t-n  \textup{ for all } i+1\leqslant s \leqslant n
	\Big\},
	\end{align*}
	for $ i=0,1,\cdots,n-1 $.
	Observe that the sets $ \cup_{\sigma \in \tau_n} \sigma(\varOmega_i) $, with $ i=0,1,\cdots,n-1 $, are pairwise disjoint and  $ \cup_{\sigma \in \tau_n} \sigma(\varOmega_0) = \varOmega_0$. Thus we can partition 
	$ \cup_{\sigma\in \tau_n} \sigma(B_t) $ into $ n $ subsets  $ \cup_{\sigma \in \tau_n} \sigma(\varOmega_i) $ for $ i=0,1,\cdots,n-1 $.

 Let $ \rho_i $ denote the number of lattice points in $ \cup_{\sigma \in \tau_n} \sigma(\varOmega_i) $. Then $ \rho_0=(t-n)^n $. For $ i\geqslant 1 $, we claim that
	\begin{equation}\label{eq:rho_i}
	\rho_i=\left( \begin{array}{c} n \\i  \end{array} \right) S_i(n)(t-n)^{n-i}. 
	\end{equation}
	To see this, we separate each $ n $-tuple $ (j_1,\cdots,j_n) $ in $ \varOmega_i $ into two distinct parts $ (j_1,\cdots,j_i) $ and $ (j_{i+1},\cdots,j_n) $ such that $ t-n+1 \leqslant j_k \leqslant t-k  $ for all $ 1\leqslant k \leqslant i  $ and $  1 \leqslant j_s \leqslant t-n $ for all $ i+1\leqslant s \leqslant n $. 
	Thus we must choose $ i $ out of $ n $ positions. Define
	\begin{align*} 
	\varOmega_{i}^{(1)} &:= \Big\{(j_1,\cdots,j_i)\Big| ~ t-n+1 \leqslant j_k \leqslant t-k  \textup{ for all } 1\leqslant k \leqslant i 
	\Big\},\\
	\intertext{and }  
	\varOmega_{i}^{(0)} &:= \Big\{(j_{i+1},\cdots,j_n)\Big|  
	~  1 \leqslant j_s \leqslant t-n  \textup{ for all } i+1\leqslant s \leqslant n
	\Big\},
	\end{align*}
	for each $ \Omega_i $. Clearly $ \cup_{\sigma_0 \in \tau_{n-i}} \sigma_0(\varOmega_{i}^{(0)})= \varOmega_{i}^{(0)}$ and $ \varOmega_{i}^{(0)} $ has cardinality $(t-n)^{n-i} $. Meanwhile the set $ \varOmega_{i}^{(1)} $ is equivalent to
	 \begin{align*}\Big\{(j_1,\cdots,j_i)\Big| ~ 1 \leqslant j_k \leqslant n-k  \textup{ for all } 1\leqslant k \leqslant i 
	 \Big\},
	 \end{align*}
so $ \cup_{\sigma_1 \in \tau_i} \sigma_1(\varOmega_{i}^{(1)}) $ has cardinality $  S_i(n) $. The choice of $(j_1,\cdots,j_i) $ now leads to the equality
		\begin{align*} 
		\rho_i&=\left( \begin{array}{c} n \\i  \end{array} \right) 
		\cdot \#\Bgk{\cup_{\sigma_1 \in \tau_i} \sigma_1(\varOmega_{i}^{(1)})}
		\cdot \# \Bgk{\cup_{\sigma_0 \in \tau_{n-i}} \sigma_0(\varOmega_{i}^{(0)})}\\
		&=\left( \begin{array}{c} n \\i  \end{array} \right) 
		S_i(n)(t-n)^{n-i}, 
		\end{align*}
	concluding the claim of \eqref{eq:rho_i}.
	
	Using \eqref{eq:rho_i} and the induction hypothesis, we obtain
	\begin{align*}
	S_n(t)& = \sum_{i=0}^{n-1} \rho_i= \sum_{i=0}^{n-1} \left( \begin{array}{c} n \\i  \end{array} \right) S_i(n)(t-n)^{n-i}\\
   	& = (t-n)\sum_{i=0}^{n-1} \left( \begin{array}{c} n-1 \\i  \end{array} \right) n^i (t-n)^{n-1-i}\\
	& = (t-n)t^{n-1},
	\end{align*}
	which proves the lemma.		
\end{proof}

The cardinality of $ G_0(P_1,\cdots,P_n) $ on a Hermitian curve is given as follows.
\begin{thm}\label{th:Hermitian pure gap}
 For the pure gap set $ G_0(P_1,\cdots,P_n)$ on the Hermitian curve
 $ y^{q+1}=x^q+x $ over $ \mathbb{F}_{q^2} $, if $ n<q $, then
	\begin{equation*}
	\#  G_0(P_1,\cdots,P_n)= \sum_{A=0}^{q-n-1} \left( \begin{array}{c} A+n-1 \\n-1 \end{array} \right) (q-A-n)(q-A)^{n-1},
	\end{equation*}
	and if $ n\geqslant q $, $ G_0(P_1,\cdots,P_n)=\varnothing$.
\end{thm}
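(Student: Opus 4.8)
The plan is to read off the cardinality directly from the decomposition \eqref{eq:G0}, which already exhibits $G_0(P_1,\cdots,P_n)$ as a union of translated copies of $\cup_{\sigma\in\tau_n}\sigma(B_t)$, indexed by $A$ and by the tuples $(i_1,\cdots,i_n)$ with $\sum_{k=1}^n i_k=A$ and $t=q-A$. The entire argument reduces to verifying that this union is in fact a \emph{disjoint} union and then summing the counts $S_n(t)$ over the pieces.

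First I would establish the disjointness. Recall that each coordinate was written uniquely as $t_k=(q+1)i_k+j_k$ with $i_k\geqslant 0$ and $1\leqslant j_k\leqslant q+1$, and that the argument preceding \eqref{eq:Bjn} forces $1\leqslant j_k\leqslant q$ for every admissible tuple. Since $1\leqslant j_k\leqslant q<q+1$, this is precisely the Euclidean-type representation of the positive integer $t_k$ in ``base $q+1$'', so the assignment $(i_1,\cdots,i_n,j_1,\cdots,j_n)\mapsto(t_1,\cdots,t_n)$ is injective. Consequently, two pieces arising either from different values of $A$ or from different tuples $(i_1,\cdots,i_n)$ with the same sum $A$ cannot share a lattice point, because the integer parts $i_k$, and hence the translation vector $(q+1)(i_1,\cdots,i_n)$, are recovered uniquely from $(t_1,\cdots,t_n)$. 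This uses crucially that after applying any permutation $\sigma$ the coordinates of $\sigma(B_t)$ still lie in $\{1,\cdots,t-1\}\subseteq\{1,\cdots,q\}$, so no coordinate reaches $q+1$ and the decomposition is never ambiguous.

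With disjointness in hand the count is immediate. For a fixed $A$, translation by $(q+1)(i_1,\cdots,i_n)$ is a bijection of lattice sets, so each translated copy contains exactly $S_n(t)$ points, with $t=q-A$ and $S_n(t)=\#\big(\cup_{\sigma\in\tau_n}\sigma(B_t)\big)$. The number of admissible translation vectors is the number of solutions of $i_1+\cdots+i_n=A$ in nonnegative integers, which equals $\binom{A+n-1}{n-1}$ by Lemma~\ref{lem:solu}. Summing over the disjoint pieces gives
\[
\#G_0(P_1,\cdots,P_n)=\sum_{A=0}^{q-n-1}\binom{A+n-1}{n-1}\,S_n(q-A).
\]
Since the index range $0\leqslant A\leqslant q-n-1$ obtained from \eqref{eq:jk} forces $t=q-A\geqslant n+1$, Lemma~\ref{lem:Sn(t)} yields $S_n(q-A)=(q-A-n)(q-A)^{n-1}$, and substituting produces the claimed formula. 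For the case $n\geqslant q$ I would simply note that this same range $0\leqslant A\leqslant q-n-1$ becomes empty once $q-n-1<0$; equivalently $B_t$, and hence every $\sigma(B_t)$, is empty because $t=q-A\leqslant q\leqslant n$. Thus no lattice point survives and $G_0(P_1,\cdots,P_n)=\varnothing$.

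The main obstacle is the disjointness verification in the second step: one must exclude both sources of potential overlap—distinct values of $A$ and distinct $i$-tuples within the same $A$—and this rests entirely on the uniqueness of the ``base $q+1$'' decomposition together with the bound $j_k\leqslant q$ surviving every permutation. Once that is secured, the remainder is a mechanical application of Lemmas~\ref{lem:solu} and~\ref{lem:Sn(t)}.
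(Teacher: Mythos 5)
Your proposal is correct and follows essentially the same route as the paper: both read the count directly off the decomposition \eqref{eq:G0}, apply Lemma~\ref{lem:solu} to count the tuples $(i_1,\cdots,i_n)$ with sum $A$, and invoke Lemma~\ref{lem:Sn(t)} with $t=q-A$. The only difference is that you spell out the disjointness of the union (via uniqueness of the representation $t_k=(q+1)i_k+j_k$ with $1\leqslant j_k\leqslant q$), a point the paper's proof treats as immediate.
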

\begin{proof}
	It is trivial that $ G_0(P_1,\cdots,P_n)=\varnothing$ if $ n\geqslant q $. So we suppose that $ n<q $. From~\eqref{eq:G0} and Lemma~\ref{lem:solu}, the number of $ n  $-tuples $ (i_1,\cdots,i_n) $ such that $ i_1+\cdots+i_n=A $ is
	\begin{equation*}
	\left( \begin{array}{c} A+n-1 \\n-1 \end{array} \right).
	\end{equation*}
	In \eqref{eq:G0}, we substitute $ t$ by $q-A $. Then the desired conclusion follows immediately from Lemma~\ref{lem:Sn(t)} and \eqref{eq:G0}.
\end{proof}
\begin{remark}
	It is not difficult to compute the formula
	\begin{equation*}
	\# G_0(P_1,P_2) = \sum_{A=0}^{q-3}\Bgk{(q-1-A)^2-1}(A+1)=\dfrac{q}{12}(q-1)(q-2)(q+3),
	\end{equation*} 
	which appears in \cite{matthews2001weierstrass}. Thus one can regard Theorem \ref{th:Hermitian pure gap} as a generalization of this result concerning about arbitrary places.
\end{remark}

\section{Pure gaps from a Quotient of the Hermitian Curve }\label{sec:Quo}
In this section, we consider a quotient of the Hermitian curve defined by 
\begin{align}\label{eq:Quocurve}
 y^m=x^q+x
\end{align}  over $ \mathbb{F}_{q^2} $ where
$ m \geqslant 2 $ is a divisor of $ q +1 $. This curve was originally studied by Schmidt \cite{Schmidt1939} as the first known example of a non-classical curve.
Remember that the case $ m=q+1 $ is associated with the much-studied Hermitian curve. 

In the rest of this paper, we always assume that $ q+1=mN $ for a positive integer $ N $.  By Theorem~\ref{thm:Weier}, we obtain the pure gaps from
a quotient of the Hermitian curve.
\begin{thm}\label{thm:Quopuregap}
	Let $ P_1 ,\cdots,  P_n $ be distinct rational places and $ P_{\infty} $ be the unique place at infinity on the quotient of the Hermitian curve defined by \eqref{eq:Quocurve}. Then the pure gap sets $ G_0(P_1,\cdots,P_{n}) $ and $ G_0(P_1,\cdots,P_{n-1},P_{\infty}) $ 
	are both equal to each other, which is given by 
     \begin{align}\label{eq:HermPurgapQuo}
      &\Big\{(t_1,\cdots,t_n)\in \mathbb{N}^n~\Big |
       ~ m \sum\limits_{s=1 \atop s\neq k}^n\left\lceil \dfrac{t_k-t_s}{m}\right\rceil \nonumber \\
      &+ m (q-n) \left\lceil \dfrac{t_k}{m}\right\rceil
       > qt_k
       \textup{ for all } 1\leqslant k \leqslant n \Big\}.
   \end{align} 
\end{thm}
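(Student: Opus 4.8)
The plan is to read both pure gap sets off Theorem~\ref{thm:Weier} after recognising \eqref{eq:Quocurve} as a Kummer extension. First I would factor $x^q+x=\prod_{i=1}^{q}(x-\alpha_i)$, where $\alpha_1,\dots,\alpha_q$ are the pairwise distinct roots of $x^q+x$ in $\mathbb{F}_{q^2}$, exhibiting \eqref{eq:Quocurve} as the Kummer extension $y^m=f(x)^{\lambda}$ with $f(x)=x^q+x$, $r=q$ and $\lambda=1$. Because $m\mid q+1$ and $\gcd(q,q+1)=1$, we get $\gcd(m,r\lambda)=\gcd(m,q)=1$, so the hypotheses of Theorem~\ref{thm:Weier} hold and $P_1,\dots,P_{n-1},P_n,P_{\infty}$ are totally ramified, with $2\leqslant n\leqslant q$. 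I would also stress that a shortcut via a symmetry sending $P_\infty$ to an affine place is \emph{not} available in general for the quotient, so the two sets must be compared arithmetically.

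For the affine tuple the result is immediate: substituting $r=q$ into Theorem~\ref{thm:Weier}(1) turns its description of $G_0(P_1,\dots,P_n)$ verbatim into \eqref{eq:HermPurgapQuo}. The substance lies with $P_{\infty}$. Here I would invoke Theorem~\ref{thm:Weier}(2) with the explicit solution $(a,b)=(-1,N)$ of $ar+bm=1$, which is valid since $-q+Nm=-q+(q+1)=1$. With this choice $-a=1$ and $a+bm=q$, so $\ceil{\frac{-at_n}{m}}=\ceil{\frac{t_n}{m}}$ and $(a+bm)t_n=qt_n$. After these substitutions the defining system of $G_0(P_1,\dots,P_{n-1},P_{\infty})$ consists of one distinguished inequality (the first, in which the index $1$ and the coordinate $t_n$ attached to $P_{\infty}$ play a special role) together with the $n-1$ inequalities indexed by $1\leqslant j\leqslant n-1$, and the goal is to show this system cuts out exactly the symmetric set \eqref{eq:HermPurgapQuo}.

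I would match the distinguished inequality with the $k=n$ condition of \eqref{eq:HermPurgapQuo} and each $j$-inequality with the $k=j$ condition. Comparing the two, every corresponding pair agrees except that a ceiling $\ceil{\frac{t_k-t_n}{m}}$ appearing in \eqref{eq:HermPurgapQuo} is traded for $\ceil{\frac{t_k}{m}}$ on the $P_{\infty}$ side, with a matching $\pm t_n$ moved across; since all terms are integers, such a swap changes each side by a bounded rounding amount that depends only on residues modulo $m$. To control this I would run the residue reduction already used in Section~\ref{sec:Hermitian}: write $t_k=mi_k+j_k$ with $i_k\geqslant 0$ and $1\leqslant j_k\leqslant m$, so that $\ceil{t_k/m}=i_k+1$ and $\ceil{\frac{t_k-t_n}{m}}=i_k-i_n+\ceil{\frac{j_k-j_n}{m}}$, and then rewrite both systems as conditions on the residues $(j_1,\dots,j_n)$ and the single sum $A=\sum_k i_k$. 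Because every rounding discrepancy becomes a fixed function of the $j_k$ once this is done, I expect both systems to collapse to the same residue-level inequalities, whence the two pure gap sets coincide and equal \eqref{eq:HermPurgapQuo}.

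The hard part is exactly this reconciliation. The $P_{\infty}$ description is genuinely \emph{not} term-by-term equal to \eqref{eq:HermPurgapQuo} — the corresponding inequalities differ by the rounding terms above — so one cannot merely quote equality of individual conditions (a brute-force check on small $(q,m,n)$ confirms the sets agree even though the termwise identities fail). The obstacle is therefore to carry out the residue reduction carefully for the asymmetric $P_\infty$-system and to verify that the distinguished status of the index $1$ and of the coordinate $t_n$ washes out, leaving precisely the symmetric residue conditions obtained from \eqref{eq:HermPurgapQuo}. Establishing that these bounded discrepancies never push a lattice point across a strict-inequality boundary, rather than the routine bookkeeping with $(a,b)=(-1,N)$, is where the real work sits.
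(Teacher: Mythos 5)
Your setup is correct and coincides with the paper's: realize \eqref{eq:Quocurve} as a Kummer extension with $r=q$ and $\lambda=1$, read $G_0(P_1,\cdots,P_n)$ directly off Theorem~\ref{thm:Weier}(1), and specialize Theorem~\ref{thm:Weier}(2) with $(a,b)=(-1,N)$ so that $-at_n=t_n$ and $(a+bm)t_n=qt_n$. But the proposal stops exactly where the content of the theorem begins. You observe that the two inequality systems are not termwise equal, say the discrepancies are ``bounded rounding amounts'', state that you \emph{expect} both systems to collapse to the same residue-level conditions, and then explicitly defer the verification that these discrepancies ``never push a lattice point across a strict-inequality boundary'' as ``where the real work sits''. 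That deferred step \emph{is} the proof of the equality $G_0(P_1,\cdots,P_n)=G_0(P_1,\cdots,P_{n-1},P_\infty)$; without it only the first assertion (which is indeed immediate) has been established.

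The paper closes this gap with two concrete observations missing from your outline. First, any tuple satisfying the $k$-th inequality of \eqref{eq:HermPurgapQuo} has $m\nmid t_k$: writing $t_k=mi_k+j_k$ with $1\leqslant j_k\leqslant m$ and supposing $j_k=m$ forces $(n-1)+(q-n)>A+q$ with $A=\sum_k i_k\geqslant 0$, a contradiction. Since $\gcd(m,q)=1$, the quantity $qt_k/m$ is then non-integral, so the strict inequality $\text{LHS}>qt_k/m$ with integral left-hand side is \emph{equivalent} to $\text{LHS}>\floor{qt_k/m}=Nt_k+\floor{-t_k/m}$. Second, the exact identities $\ceil{x}+\floor{-x}=0$ and $-\ceil{x}=\floor{-x}$ then let one trade the summand $\ceil{\frac{t_k-t_n}{m}}$ on the left for $\floor{\frac{t_n-t_k}{m}}$ on the right and absorb everything into $\floor{\frac{t_n+qt_k}{m}}$, after which one more integrality step yields exactly the inequality $m(\cdots)>t_n+qt_k$ of the $P_\infty$-system. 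This is an exact computation, not an estimate on rounding errors, and it is what your residue reduction would have to produce. As written, the equivalence is conjectured (supported only by a numerical spot-check), the reverse implication is not discussed, and the distinguished inequality paired with $k=n$ is never treated; so the proof of the second assertion is not actually given.
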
 
\begin{proof}
   If we replace $ r $ by $ q $ in the first assertion of Theorem~\ref{thm:Weier}, then we get the pure gap set $ G_0(P_1,\cdots,P_{n}) $ as described in \eqref{eq:HermPurgapQuo}.

   Suppose that $ q+1=mN $ for an integer $ N$. So $ -q+Nm=1 $. Putting $ r=q $, $ a=-1 $ and $ b=N $ in the second assertion of Theorem~\ref{thm:Weier}, we obtain the pure gap set $	G_0(P_1,\cdots,P_{n-1},P_{\infty}) $ as follows:
  \begin{align}\label{eq:purePinf}
    &\Big \{(t_1,\cdots,t_{n-1},t_n)\in \mathbb{N}^{l}~\Big |
     ~ m \sum\limits_{s=2 }^{n-1}\left\lceil \dfrac{t_n-t_s}{m}\right\rceil \nonumber\\
    & 	+ m (q-n+1) \left\lceil \dfrac{ t_n}{m}\right\rceil ~ > t_1 + qt_n, \nonumber\\
    &~ m \sum\limits_{s=1 \atop s\neq k}^{n-1}\left\lceil \dfrac{t_k-t_s}{m}\right\rceil
    + m (q-n+1) \left\lceil \dfrac{t_k}{m}\right\rceil
    > t_n+ qt_k
   \textup{ for all } 1\leqslant k \leqslant n-1
   \Big\} .
  \end{align}	   
   It remains to prove that $	G_0(P_1,\cdots,P_{n}) $ equals $	G_0(P_1,\cdots,P_{n-1},P_{\infty}) $. Taking $ k=1 $ in  \eqref{eq:HermPurgapQuo} leads to
   \begin{equation}\label{eq:oneconditonQuo}
   \sum\limits_{s=2 }^{n} \ceil{\dfrac{t_1-t_s}{m}} +(q-n)\ceil{\dfrac{t_1}{m}} >\frac{qt_1}{m} .
   \end{equation}
   Write each $ t_k $ in terms of $  m i_k + j_k $, where $ i_k \geqslant 0 $ and $ 1\leqslant j_k \leqslant m $ for $ k=1,\cdots,n $. By \eqref{eq:oneconditonQuo}, we have
   \begin{equation}\label{eq:con1Quo}
   \ceil{\dfrac{j_1-j_2}{m}} +\cdots+ \ceil{\dfrac{j_1-j_n}{m}} +(q-n)\ceil{\dfrac{j_1}{m}} > A+ \dfrac{q}{m}j_1 ,
   \end{equation}  
   where $ A:= \sum_{k=1}^{n} i_k $.
   We claim that $ j_1\neq m $. Suppose that this is false, so $ j_1=m $ and
   \begin{equation*}
   (n-1)+(q-n)  > A+q
   \end{equation*}
   by~\eqref{eq:con1Quo}. Thus $ A<-1 $ giving a contradiction as $   A \geqslant 0 $ by definition.
   So we must have $ 1 \leqslant j_1 \leqslant m-1 $ and $ m \nmid t_1 $. (In the same manner as above, we must have $ 1 \leqslant j_k \leqslant m-1 $ for $ k=1,\cdots,n $).   
   It follows from \eqref{eq:oneconditonQuo} that
      \begin{equation*} 
      \sum\limits_{s=2 }^{n} \ceil{\dfrac{t_1-t_s}{m}} +(q-n)\ceil{\dfrac{t_1}{m}} > \floor{\frac{qt_1}{m}} =Nt_1+\floor{-\frac{t_1}{m}},
      \end{equation*}
      since $ q=mN-1 $. It is clear that $ \ceil{\dfrac{ t_n}{m}}+\floor{-\dfrac{t_n}{m}} =0 $. This establishes that
        \begin{align*}  
        \sum\limits_{s=2 }^{n-1} \ceil{ \dfrac{t_1-t_s}{m}} 
        +  (q-n+1)  \ceil{ \dfrac{ t_1}{m} } >
         \floor{ \dfrac{t_n-t_1}{m}} + Nt_1= \floor{\frac{t_n}{m} + \frac{qt_1}{m}},
        \end{align*}
   which means that
  \begin{align*}
  	  \sum\limits_{s=2 }^{n-1}\ceil{ \dfrac{t_1-t_s}{m}}
  	  	+  (q-n+1) \ceil{ \dfrac{ t_1}{m}} > \frac{t_n}{m} + \frac{qt_1}{m},
  \end{align*}  
   as appeared in \eqref{eq:purePinf}.
   This completes the proof of this theorem.
\end{proof}

In the following, we only count all the pure gaps in $	G_0(P_1,\cdots,P_{n}) $ on a quotient of the Hermitian curve, since the set 
$ G_0(P_1,\cdots,P_{n-1},P_{\infty}) $ is equal to $ G_0(P_1,\cdots,P_{n}) $.

As before, we denote $ t_k := m i_k + j_k $  where $ i_k \geqslant 0 $ and $ 1\leqslant j_k \leqslant m-1 $ for $ k=1,\cdots,n $. Since $ 1\leqslant j_1 \leqslant m-1 $ and $ \gcd(m,q)=1 $, it follows from~\eqref{eq:con1Quo} that
\begin{equation*}
\ceil{\dfrac{j_1-j_2}{m}} +\cdots+ \ceil{\dfrac{j_1-j_n}{m}} +(q-n)\ceil{\dfrac{j_1}{m}} \geqslant A+ \ceil{\dfrac{q}{m}j_1} ,
\end{equation*} 
which implies that
\begin{equation*} 
\ceil{\dfrac{j_1-j_2}{m}} +\cdots+ \ceil{\dfrac{j_1-j_n}{m}} - N j_1  \geqslant A-(q-n),
\end{equation*}
as $ q+1=mN $.
 Thus we get a set of lattice points $ J_A $ defined by 
 \begin{align*}
 J_A := 
 \Big\{ &(j_1,\cdots,j_n)\in \mathbb{N}^n~\Big | 
 ~1 \leqslant j_1,\cdots,j_n \leqslant m-1 , \nonumber \\
 & \sum\limits_{s=1 \atop s\neq k}^n\ceil{\dfrac{j_k-j_s}{m}} -N j_k  \geqslant A-(q-n) \textup{ for all } 1\leqslant k \leqslant n
 \Big\}.
 \end{align*}
 Now we come to a characterization of $ J_A $.
 Without loss of generality, we suppose that $ j_1 \geqslant j_2 \geqslant \cdots \geqslant j_n $.
 Note that the inequality 
 \begin{align*}
 \sum\limits_{s=1 \atop s\neq k}^n\ceil{\dfrac{j_k-j_s}{m}} -Nj_k  \geqslant A-(q-n)
 \end{align*}
 implies that
 \begin{align}\label{eq:Njk1}
 N \leqslant N j_k \leqslant \sum\limits_{s=1 \atop s\neq k}^n\ceil{\dfrac{j_k-j_s}{m}} +(q-n)-A.
 \end{align}
 Under the assumption that $ j_1 \geqslant j_2 \geqslant \cdots \geqslant j_n $, we observe that 
 \begin{align*}
 \sum\limits_{s=1 \atop s\neq k}^n\ceil{\dfrac{j_k-j_s}{m}}\leqslant (k-1)\times 0+(n-k)\times 1 =n-k.
 \end{align*}
  Hence from \eqref{eq:Njk1} we have for $ k=1,\cdots,n $,  
 \begin{equation} \label{eq:Njk}
 N\leqslant N j_k \leqslant q-k-A.
 \end{equation}
 It follows that $0 \leqslant A \leqslant q-n-N $ by taking $ k=n $ in~\eqref{eq:Njk}.
  Write $  q-A=N(t-1)+\beta $ where $ t:= \ceil{\dfrac{q-A}{N}} $ and $ 1 \leqslant \beta \leqslant N $. From~\eqref{eq:Njk} we find
  \begin{equation*} 
  1\leqslant  j_k \leqslant \floor{\dfrac{q-A-k}{N}}=t-1+ \floor{\dfrac{\beta-k}{N}}=t-\Bgk{\ceil{\dfrac{k-\beta}{N}}+1}.
  \end{equation*} 
  By setting $ \lambda_k:= \ceil{\dfrac{k-\beta}{N}}+1 $, we denote
  \begin{equation*}  
  \mathcal{B}_t: = \Big\{(j_1,\cdots,j_n)\Big|~j_1 \geqslant j_2 \geqslant \cdots \geqslant j_n 
  \textup{ and } 1 \leqslant j_k \leqslant t-\lambda_k \textup{ for all } 1\leqslant k \leqslant n \Big\}.
  \end{equation*}

  As before we can represent  $ J_A $ as the union set
  $   \cup_{\sigma\in \tau_n} \sigma( \mathcal{B}_t)$, where 
   \begin{equation*}
   \sigma( \mathcal{B}_t)
    := \Big\{\sigma (j_1,\cdots,j_n) ~ \Big| ~ (j_1,\cdots,j_n)\in  \mathcal{B}_t \Big\}.
    \end{equation*} 
    If we define
  \begin{equation}\label{eq:Bt1}
  B_t:= \Big\{(j_1,\cdots,j_n)\Big|~  1 \leqslant j_k \leqslant t-\lambda_k \textup{ for all } 1\leqslant k \leqslant n \Big\},
  \end{equation}
  then it is easily checked that the set $   \cup_{\sigma\in \tau_n} \sigma( \mathcal{B}_t)$ is equal to the set $   \cup_{\sigma\in \tau_n} \sigma(  B_t)$. Therefore, we have the following representation of $ G_0(P_1,\cdots,P_n) $:
  \begin{equation}\label{eq:G01}
  G_0(P_1,\cdots,P_n) = \cup_{0\leqslant A \leqslant q-n-N} \Bgk{ m(i_1,\cdots,i_n)+  \cup_{\sigma\in \tau_n} \sigma(B_t)},
  \end{equation} 
  where $ t= \ceil{\dfrac{q-A}{N}} $ and $ A= \sum_{k=1}^{n} i_k $ with all $ i_k \geqslant 0 $.

 Let $ S_n^A(t) $ denote the number of lattice points in the union of $ \sigma(B_t) $ for all $ \sigma\in \tau_n $. We have the following lemma.
  \begin{lem}\label{lem:Sn(t)1}
  	Let $ A,n,N,\beta $ be nonnegative integers such that $ 0\leqslant A \leqslant q-n-N $ and 
  	 $  q-A=N(t-1)+\beta $ where $ t= \ceil{\dfrac{q-A}{N}} $ and $ 1 \leqslant \beta \leqslant N $. Set $ S_0^A(t)= 1 $. If $ 1 \leqslant n \leqslant \beta $, we have
  	\begin{align*} 
  	S_n^A(t) = (t-1 )^n .
  	\end{align*}
  If $ n > \beta $, we have 
  	\begin{align*} 
  	S_n^A(t) = \left\{\begin{array}{ll}
  	0  & \textup{if } t\leqslant \lambda_n,\\
  	\sum_{i=0}^{n-1} 	
  	\left( \begin{array}{c} n \\i \end{array} \right) S_i^A(\lambda_n)(t-\lambda_n)^{n-i} & \textup{if } t > \lambda_n,
  	\end{array}  
  	\right.
  	\end{align*}
  	where $ \lambda_n= \ceil{\dfrac{n-\beta}{N}}+1 $.
  \end{lem}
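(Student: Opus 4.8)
The plan is to follow the template of the proof of Lemma~\ref{lem:Sn(t)}, adapting the high/low partition to the present setting, where the upper bounds $t-\lambda_k$ are only weakly decreasing in $k$ (because $\lambda_k=\ceil{(k-\beta)/N}+1$ is non-decreasing) rather than strictly decreasing. First I would record two elementary facts. For $1\leqslant k\leqslant\beta$ one has $1-\beta\leqslant k-\beta\leqslant 0$, and since $\beta\leqslant N$ this gives $(k-\beta)/N\in(-1,0]$, so $\ceil{(k-\beta)/N}=0$ and hence $\lambda_k=1$; in general $\lambda_1\leqslant\lambda_2\leqslant\cdots\leqslant\lambda_n$, so $t-\lambda_n$ is the smallest of the $n$ upper bounds. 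With these in hand the two degenerate cases are immediate. If $1\leqslant n\leqslant\beta$, then every bound equals $t-\lambda_k=t-1$, so $B_t=\{(j_1,\ldots,j_n): 1\leqslant j_k\leqslant t-1\}$ is a symmetric box, $\cup_{\sigma\in\tau_n}\sigma(B_t)=B_t$, and $S_n^A(t)=(t-1)^n$ (the constraint $A\leqslant q-n-N$ forces $q-A\geqslant n+N>N\geqslant\beta$, hence $t\geqslant 2$, so the box is nonempty). If $n>\beta$ with $t\leqslant\lambda_n$, then the requirement $1\leqslant j_n\leqslant t-\lambda_n\leqslant 0$ is unsatisfiable, whence $B_t=\varnothing$ and $S_n^A(t)=0$.

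For the main case $n>\beta$ and $t>\lambda_n$, the idea is to split each point of $\cup_{\sigma\in\tau_n}\sigma(B_t)$ according to how many of its coordinates exceed the common floor $t-\lambda_n$. Recalling that $\cup_{\sigma\in\tau_n}\sigma(B_t)=\cup_{\sigma\in\tau_n}\sigma(\mathcal{B}_t)$, a point lies in this set exactly when its decreasing rearrangement $j_{(1)}\geqslant\cdots\geqslant j_{(n)}$ satisfies $j_{(k)}\leqslant t-\lambda_k$; in particular $j_{(n)}\leqslant t-\lambda_n$, so at most $n-1$ coordinates can exceed $t-\lambda_n$. For $i=0,1,\ldots,n-1$ I would define
\[
\Omega_i=\Big\{(j_1,\ldots,j_n)\Big|~t-\lambda_n+1\leqslant j_k\leqslant t-\lambda_k\ (1\leqslant k\leqslant i),\ 1\leqslant j_s\leqslant t-\lambda_n\ (i+1\leqslant s\leqslant n)\Big\},
\]
exactly as in Lemma~\ref{lem:Sn(t)}. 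Since the number of coordinates exceeding $t-\lambda_n$ is permutation-invariant, the sets $\cup_{\sigma\in\tau_n}\sigma(\Omega_i)$ are pairwise disjoint; and conjugating a point by the permutation that sorts it decreasingly shows that every point of $\cup_{\sigma\in\tau_n}\sigma(B_t)$ with exactly $i$ high coordinates lies in $\cup_{\sigma\in\tau_n}\sigma(\Omega_i)$, so these $n$ pieces partition the whole set.

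It then remains to count $\rho_i:=\#\big(\cup_{\sigma\in\tau_n}\sigma(\Omega_i)\big)$. Because the high positions (those with value $>t-\lambda_n$) are recoverable from the point, a point of this union is determined by choosing the $i$ high positions ($\binom{n}{i}$ ways), filling the $n-i$ low positions freely in $\{1,\ldots,t-\lambda_n\}$ ($(t-\lambda_n)^{n-i}$ ways), and filling the high positions. After the shift $j\mapsto j-(t-\lambda_n)$, the high block ranges over precisely the symmetrization $\cup_{\sigma}\sigma\big(\{(j_1,\ldots,j_i): 1\leqslant j_k\leqslant \lambda_n-\lambda_k\}\big)$, which by definition has $S_i^A(\lambda_n)$ points (same $A$, hence same $\beta$ and same $\lambda_k$); thus $\rho_i=\binom{n}{i}S_i^A(\lambda_n)(t-\lambda_n)^{n-i}$, and summing over $i$ gives the claimed recurrence. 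The step I expect to be the main obstacle is the bookkeeping forced by the ties among the $\lambda_k$: unlike in Lemma~\ref{lem:Sn(t)}, some $\Omega_i$ may be empty, precisely when $\lambda_i=\lambda_n$. I would check that this is harmless because the corresponding term vanishes automatically—if $\lambda_i=\lambda_n$ then the bound $\lambda_n-\lambda_i=0$ makes the $i$-dimensional box empty, so $S_i^A(\lambda_n)=0$—whence the sum over $i=0,\ldots,n-1$ is correct verbatim. Verifying cleanly that the shift identifies the high block with the $i$-dimensional $B_{\lambda_n}$ is the other point demanding care.
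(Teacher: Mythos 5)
Your proposal is correct and follows essentially the same route as the paper's proof: the same decomposition of $\cup_{\sigma\in\tau_n}\sigma(B_t)$ into the orbits of the sets $\varOmega_i$, the same count $\rho_i=\binom{n}{i}S_i^A(\lambda_n)(t-\lambda_n)^{n-i}$ via the shift identifying the high block with the $i$-dimensional $B_{\lambda_n}$, and the same treatment of the two degenerate cases. Your added remarks (permutation-invariance of the number of high coordinates, and that ties $\lambda_i=\lambda_n$ make the corresponding term vanish automatically) only make explicit details the paper leaves to the reader.
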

  \begin{proof}
  	Suppose that $ n-\beta=N(\lambda_n-1)+b $ where $ \lambda_n= \ceil{\dfrac{n-\beta}{N}}+1 $ and $ 1 \leqslant b \leqslant N $. 
  	The set $ B_t $ of~\eqref{eq:Bt1} can be expressed as
  	 \begin{align}\label{eq:Bt2}
  	  B_t= \Big\{(j_1,\cdots,j_n) \Big|
  	  &~  1 \leqslant j_k \leqslant t-1 \textup{ for } 1\leqslant k \leqslant \beta,\nonumber \\
  	  &~  1 \leqslant j_k \leqslant t-2 \textup{ for } \beta+ 1\leqslant k \leqslant \beta +N,\nonumber \\
  	  &\cdots \nonumber \\
  	  &~  1 \leqslant j_k \leqslant t-\lambda_n \textup{ for } \beta+N(\lambda_n-1)+ 1\leqslant k \leqslant n
  	   \Big\}.
  	   \end{align}
  	   
  	 (i) If $ 1 \leqslant n \leqslant \beta $, we have
  	  \begin{equation*}
  	  B_t= \Big\{(j_1,\cdots,j_n)\Big|~  1 \leqslant j_k \leqslant t-1 \textup{ for all } 1\leqslant k \leqslant n \Big\}.
  	  \end{equation*}
  	  The desired assertion then follows.
  	  
  	(ii) If $ n > \beta $, it is trivial that $ B_t $ of~\eqref{eq:Bt2} is  empty if $ t\leqslant \lambda_n $. It remains to 
  	  to count the lattice points in the set $ \cup_{\sigma\in \tau_n} \sigma(B_t) $ for $ t>\lambda_n $. Let us consider the sets
  	\begin{align*} 
  	\varOmega_i= \Big\{(j_1,\cdots,j_n)\Big|& ~ t-\lambda_n+1 \leqslant j_k \leqslant t-\lambda_k  \textup{ for all } 1\leqslant k \leqslant i, \\
  	& ~  1 \leqslant j_s \leqslant t-\lambda_n  \textup{ for all } i+1\leqslant s \leqslant n
  	\Big\},
  	\end{align*}
  	for $ i=0,1,\cdots,n-1 $.
  	One can observe that the sets  $ \cup_{\sigma \in \tau_n} \sigma(\varOmega_i) $, with $ i=0,1,\cdots,n-1 $, are pairwise disjoint and  $ \cup_{\sigma \in \tau_n} \sigma(\varOmega_0) = \varOmega_0$. So they form a partition of $ \cup_{\sigma\in \tau_n} \sigma(B_t) $. 
  	Thus $ \cup_{\sigma\in \tau_n} \sigma(B_t) $ is equal to the union of $ \cup_{\sigma \in \tau_n} \sigma(\varOmega_i) $ for all $ i=0,1,\cdots,n-1 $. 

    Let $ \rho_i $ denote the number of lattice points in $ \cup_{\sigma \in \tau_n} \sigma(\varOmega_i) $. Then $ \rho_0=(t-n)^n $. For $ i\geqslant 1 $, we claim that
   	\begin{equation}\label{eq:rho2}
  	\rho_i=\left( \begin{array}{c} n \\i  \end{array} \right) S_i^A(n)(t-\lambda_n)^{n-i}.
  	\end{equation}
  	To see this, we separate the $ n $ coordinates $ (j_1,\cdots,j_n) $ in $ \varOmega_i $ into two distinct parts $ (j_1,\cdots,j_i) $ and $ (j_{i+1},\cdots,j_n) $ such that $ t-\lambda_n+1 \leqslant j_k \leqslant t-\lambda_k $ for all $ 1\leqslant k \leqslant i  $ and $  1 \leqslant j_s \leqslant t-\lambda_n $ for all $ i+1\leqslant s \leqslant n $. 
  	We emphasize that $ (j_1,\cdots,j_i) $ is chosen out of $ n $ positions. 
  	Define
  	\begin{align*} 
  	\varOmega_{i}^{(1)} &:= \Big\{(j_1,\cdots,j_i)\Big| ~ t-\lambda_n+1 \leqslant j_k \leqslant t-\lambda_k  \textup{ for all }  1\leqslant k \leqslant i 
  	\Big\},\\
  	 \varOmega_{i}^{(0)} & := \Big\{(j_{i+1},\cdots,j_n)\Big|  
  	 ~  1 \leqslant j_s \leqslant t-\lambda_n  \textup{ for all } i+1\leqslant s \leqslant n
  	\Big\},
  	\end{align*}
  	for each $ \varOmega_{i} $. Notice that $ \cup_{\sigma_0 \in \tau_{n-i}} \sigma_0(\varOmega_{i}^{(0)})= \varOmega_{i}^{(0)}$ and $ \varOmega_{i}^{(0)} $ has cardinality $(t-\lambda_n)^{n-i} $. On the other hand $ \varOmega_{i}^{(1)} $ is equivalent to
  	\begin{align*}\Big\{(j_1,\cdots,j_i)\Big| ~ 1 \leqslant j_k \leqslant \lambda_n-\lambda_k  \textup{ for all } 1\leqslant k \leqslant i 
  	\Big\},
  	\end{align*}
  	where $ \lambda_1=1 $. So $ \cup_{\sigma_1 \in \tau_i} \sigma_1(\varOmega_{i}^{(1)}) $ has cardinality $  S_i^A(\lambda_n) $. The choice of $(j_1,\cdots,j_i) $ now leads to the claim of \eqref{eq:rho2}. Then the desired assertion follows from the equation
  	$ S_n(t) = \sum_{i=0}^{n-1} \rho_i $, where $ \rho_i $ is given by \eqref{eq:rho2}. 
\end{proof}

With above preparations, we can determine precisely the number of pure gaps in the set $ G_0(P_1,\cdots,P_n)$ of \eqref{eq:G01} on a quotient of the Hermitian curve.
\begin{thm}\label{thm:puregap}
	Let $ q+1=mN $. For the quotient of the Hermitian curve defined by \eqref{eq:Quocurve}, if $ n> q-N $, $ G_0(P_1,\cdots,P_n)=\varnothing$, and if $ 2 \leqslant n \leqslant q-N $, then
	\begin{equation*}
	\#  G_0(P_1,\cdots,P_n)= \sum_{A=0}^{q-n-N} \left( \begin{array}{c} A+n-1 \\n-1 \end{array} \right) S_n^A(t),
	\end{equation*}
	where $  t= \ceil{\dfrac{q-A}{N}} $ and $ S_n^A(t) $ is given in Lemma~\ref{lem:Sn(t)1}.
\end{thm}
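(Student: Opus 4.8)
The plan is to follow verbatim the strategy that proved Theorem~\ref{th:Hermitian pure gap}, except that now the engine is the decomposition~\eqref{eq:G01} of $G_0(P_1,\cdots,P_n)$ rather than~\eqref{eq:G0}. Everything reduces to showing that the union on the right-hand side of~\eqref{eq:G01} is disjoint, so that $\# G_0$ is obtained simply by adding the cardinalities of the blocks indexed by $A$ and, for each $A$, by the tuples $(i_1,\cdots,i_n)$ of nonnegative integers with $\sum_{k=1}^n i_k=A$.

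First I would dispose of the empty case. When $n>q-N$ one has $q-n-N<0$, so the index range $0\leqslant A\leqslant q-n-N$ in~\eqref{eq:G01} is vacuous and hence $G_0(P_1,\cdots,P_n)=\varnothing$, giving the first assertion.

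For $2\leqslant n\leqslant q-N$ I would fix $A$ with $0\leqslant A\leqslant q-n-N$ and put $t=\ceil{(q-A)/N}$. By Lemma~\ref{lem:solu} the number of tuples $(i_1,\cdots,i_n)$ of nonnegative integers with $i_1+\cdots+i_n=A$ equals $\binom{A+n-1}{n-1}$, while for each such tuple the translated block $m(i_1,\cdots,i_n)+\cup_{\sigma\in\tau_n}\sigma(B_t)$ contains exactly $S_n^A(t)$ lattice points, by the very definition of $S_n^A(t)$ evaluated in Lemma~\ref{lem:Sn(t)1} (translation by a fixed vector preserves cardinality). Summing over all admissible $A$ then yields
\begin{equation*}
\# G_0(P_1,\cdots,P_n)=\sum_{A=0}^{q-n-N}\binom{A+n-1}{n-1}S_n^A(t),
\end{equation*}
as claimed.

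The step demanding genuine care, and which I expect to be the main obstacle, is justifying that no lattice point is counted twice, i.e.\ that the blocks attached to distinct pairs $\big(A,(i_1,\cdots,i_n)\big)$ are pairwise disjoint. This rests on the fact that every point of $G_0(P_1,\cdots,P_n)$ has a \emph{unique} representation $t_k=mi_k+j_k$ with $i_k\geqslant 0$ and $1\leqslant j_k\leqslant m-1$. To confirm that the residues produced by~\eqref{eq:G01} genuinely satisfy $1\leqslant j_k\leqslant m-1$, I would verify that $t\leqslant m$: since $A\geqslant 0$ and $q=mN-1$, one has $t=\ceil{(q-A)/N}\leqslant\ceil{q/N}=\ceil{m-1/N}=m$, whereas $\lambda_1=\ceil{(1-\beta)/N}+1=1$ for every $\beta$ with $1\leqslant\beta\leqslant N$, so each coordinate obeys $1\leqslant j_k\leqslant t-\lambda_k\leqslant t-1\leqslant m-1$. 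Uniqueness of the decomposition then forces both the tuple $(i_1,\cdots,i_n)$ and $A=\sum_k i_k$ to be determined by $(t_1,\cdots,t_n)$, which supplies the required disjointness and closes the count.
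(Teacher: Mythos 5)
Your proposal is correct and follows essentially the same route as the paper, whose proof of Theorem~\ref{thm:puregap} is simply the remark that the argument of Theorem~\ref{th:Hermitian pure gap} carries over: combine the decomposition~\eqref{eq:G01} with Lemma~\ref{lem:solu} to count the tuples $(i_1,\cdots,i_n)$ summing to $A$, and with Lemma~\ref{lem:Sn(t)1} to count each translated block. Your explicit check that the blocks are pairwise disjoint (via $t\leqslant m$, $\lambda_k\geqslant\lambda_1=1$, and the uniqueness of the residues $1\leqslant j_k\leqslant m-1$) is a detail the paper leaves implicit, and it is sound.
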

\begin{proof}
	The proof is similar to that of Theorem \ref{th:Hermitian pure gap} and so is omitted here.
\end{proof}

%
%
  

The following, viewed as an extension of Matthews \cite{matthews2001weierstrass}, is a consequence of Theorem \ref{thm:puregap} by considering the gaps and pure gaps at a pair of points.
\begin{cor}\label{cor:n=N=2}
 Let $ q=mN-1 $ and $ q-2-N\geqslant 0 $. For the quotient of the Hermitian curve defined by $ y^m=x^q+x $ over $ \mathbb{F}_{q^2} $, we have
 	\begin{align*}
 	\#  G_0(P_1,P_2)&= \frac{(q+1)(m-1)}{12}\Big((q+1)(m-1)-2m+N+7\Big)-q(m-1),\\
 	\intertext{and}
 	\#  G(P_1,P_2)&= \frac{m-1}{12}\Big((3m-1)q^2-(6m+N+5)q+3m-N-4\Big)+q(m-1).
 	\end{align*}
\end{cor}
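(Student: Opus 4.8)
The plan is to prove the two formulas in turn: $\#G_0(P_1,P_2)$ comes straight from Theorem~\ref{thm:puregap} with $n=2$, and $\#G(P_1,P_2)$ then follows from Homma's formula (Theorem~\ref{thm:Gap_2point}). Specializing Theorem~\ref{thm:puregap} to $n=2$ gives $\#G_0(P_1,P_2)=\sum_{A=0}^{q-2-N}(A+1)\,S_2^A(t)$ with $t=\ceil{(q-A)/N}$, so the first step is to make $S_2^A(t)$ explicit. Writing $q-A=N(t-1)+\beta$ with $1\leqslant\beta\leqslant N$ as in Lemma~\ref{lem:Sn(t)1}, I compute $\lambda_2=2$ and read off the two branches of that lemma: for $\beta\geqslant 2$ one has $1\leqslant n\leqslant\beta$ and $S_2^A(t)=(t-1)^2$, whereas for $\beta=1$ the recursion with $S_0^A(2)=S_1^A(2)=1$ gives $S_2^A(t)=t(t-2)=(t-1)^2-1$. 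In short, $S_2^A(t)=(t-1)^2$, decreased by $1$ exactly when $\beta=1$.

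Next I would evaluate the sum by reindexing on $u:=t-1$. For each fixed $u$ the admissible $A$ form a block of $N$ consecutive integers $q-N(u+1)\leqslant A\leqslant q-Nu-1$, on which $(t-1)^2=u^2$ is constant and $\beta$ runs over $1,\dots,N$ once, with $\beta=1$ occurring only at $A=q-Nu-1$. The interval $0\leqslant A\leqslant q-2-N$ is exactly the union of the full blocks $u=1,\dots,m-1$ with the two corners $A=-1$ and $A=q-N-1$ removed; the former contributes nothing since there $A+1=0$. Summing $(A+1)u^2$ over each block gives $\sum_{u=1}^{m-1}u^2\!\left[N(q-Nu)-N(N-1)/2\right]-(q-N)$ for the $(t-1)^2$ part, and the $\beta=1$ correction contributes $-\sum_{u}\!\left[N(m-u)-1\right]$. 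Both reduce to the elementary sums $\sum_{u=1}^{m-1}u$, $\sum u^2$, $\sum u^3$; substituting $q+1=mN$ and simplifying yields the stated closed form for $\#G_0(P_1,P_2)$.

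For $\#G(P_1,P_2)$ I would first pin down the one-point semigroups. At each finite ramified place $P_i$ one has $v_{P_i}(x-\alpha_i)=m$ and $v_{P_i}(y)=1$, while $v_{P_\infty}(x)=-m$ and $v_{P_\infty}(y)=-q$. Hence $1/(x-\alpha_i)$ and $y/(x-\alpha_i)^N$ have poles only at $P_i$, of orders $m$ and $q=Nm-1$, so $\langle m,q\rangle\subseteq H(P_i)$; since $\gcd(m,q)=1$ the complement of $\langle m,q\rangle$ has $(m-1)(q-1)/2=g$ elements, forcing $H(P_1)=H(P_2)=\langle m,q\rangle$. Therefore $\sum_{\alpha\in G(P_i)}\alpha$ is the gap sum of $\langle m,q\rangle$, namely $\frac{1}{12}(m-1)(q-1)(2mq-m-q-1)$. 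As $g>1$ under the hypothesis $q-2-N\geqslant 0$, Theorem~\ref{thm:Gap_2point} applies and gives $\#G(P_1,P_2)=\frac{1}{6}(m-1)(q-1)(2mq-m-q-1)-\#G_0(P_1,P_2)$; inserting the formula from the first part and simplifying produces the claimed expression.

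The main obstacle is the boundary bookkeeping in the block decomposition: one must check precisely which values of $A$ survive at the two ends of the range, and verify that the single block sum above simultaneously covers the Hermitian case $N=1$ and the case $N\geqslant 2$. This works because the spurious boundary contributions (such as the $u=m-1$, $\beta=1$ term) carry a factor that vanishes when $N=1$. The remaining manipulations are routine polynomial algebra; the specialization $m=q+1$, $N=1$ recovering the Remark after Theorem~\ref{th:Hermitian pure gap} is a convenient check against computational slips.
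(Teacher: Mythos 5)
Your proposal is correct and follows the same skeleton as the paper's proof: specialize Theorem~\ref{thm:puregap} to $n=2$, evaluate $S_2^A(t)$ from Lemma~\ref{lem:Sn(t)1} by cases on $\beta$ (getting $(t-1)^2$ for $\beta\geqslant 2$ and $(t-1)^2-1$ for $\beta=1$, exactly as the paper's three cases encode), and then obtain $\#G(P_1,P_2)$ from Theorem~\ref{thm:Gap_2point}. There are two genuine, if modest, departures worth noting. First, the paper organizes the sum over $A$ by residue classes $A=Nl+j$ modulo $N$ and computes three separate sums $M_1,M_2,M_3$, explicitly assuming $N\geqslant 2$ (so that its Case~1 range $0\leqslant j\leqslant N-3$ makes sense); your reindexing by blocks of constant $t=u+1$ with a uniform $\beta=1$ correction is equivalent bookkeeping, but it has the small advantage of covering $N=1$ in the same formula, since the offending boundary term carries the factor $A+1=N-1$ --- I checked your block sum against $(q,m,N)=(7,4,2)$ and it gives $33-4=29$, matching both the closed form and the paper's example. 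Second, for the single-place gap sum the paper quotes Lemma~6 of \cite{Yang2017Wsemi} for the explicit gap set and sums it directly, whereas you rederive $H(P_i)=\langle m,q\rangle$ from the pole divisors of $1/(x-\alpha_i)$ and $y/(x-\alpha_i)^N$ and invoke the standard gap-sum formula for a two-generated numerical semigroup; both yield $\tfrac{1}{12}(m-1)(q-1)(2qm-m-q-1)$, and your route is the more self-contained of the two. The only loose end in your write-up is that the range of $u$ in the $\beta=1$ correction should be stated as $u=2,\dots,m-1$ (the $u=1$ term corresponds to the excluded $A=q-N-1$), but you flag precisely this boundary bookkeeping yourself, so the argument is sound.
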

\begin{proof}
  We now assume that $ N\geqslant 2 $ since $ N=1 $ corresponds to Hermitian curves studied by Matthews \cite{matthews2001weierstrass}. In order to calculate the cardinality of $ G_0(P_1,P_2) $, it suffices to 
   compute $ S_2^A(t) $ for $ 0\leqslant A \leqslant q-2-N $ by employing Lemma \ref{lem:Sn(t)1}.
   Since $ q-2-N=N(m-2)+(N-3) $, we consider the following cases separately.
   \begin{enumerate}
   	\item Case 1: $ A=Nl+j $ where $ 0\leqslant l \leqslant m-2 $ and $ 0\leqslant j \leqslant N-3 $. Then
   	$ t=m-l$, $\beta=N-1-j \geqslant 2$. So  
   	\begin{align*}
   	S_2^{A} (m-l ) 			 
   	=(m-l-1)^2.
   	\end{align*}
   	\item Case 2: $ A=Nl+(N-2) $ where $ 0\leqslant l \leqslant m-3 $. Then 
   	$ t=m-l$, $\beta=1 $, $ \lambda_2=2 $. So
   	\begin{align*}
   	S_2^{A} (m-l ) 			 
   	&=\sum_{i=0}^{1} \left( \begin{array}{c} 2 \\i \end{array} \right) S_i^A(2) \times (m-l-2)^{2-i}\\
   	&=(m-l-2)^{2}+2(m-l-2)\\
   	&=(m-l)(m-l-2).
   	\end{align*}		
   	\item Case 3: $ A=Nl+(N-1) $ where $ 0\leqslant l \leqslant m-3 $. Then
   	$ t=m-l-1,\beta=N \geqslant 2 $. So
   	\begin{align*}
   	S_2^A (m-l-1 ) =(m-l-2)^2.
   	\end{align*}	
    \end{enumerate} 
    It follows from Theorem \ref{thm:puregap} that 
    	\begin{align*}
    	\#  G_0(P_1,P_2)&= \sum_{l=0}^{m-2} \Bgk{\sum_{j=0}^{N-3} Nl+j+1} (m-l-1)^2+\sum_{l=0}^{m-3}(Nl+N-1)(m-l)(m-l-2)\\
    	&\phantom{=}+\sum_{l=0}^{m-2} (Nl+N)(m-l-2)^2\\
    	&=M_1+M_2+M_3,
    	\end{align*}
    where 
    	\begin{align*}
    	M_1 &= \sum_{l=0}^{m-2} \Bgk{\sum_{j=0}^{N-3} Nl+j+1} (m-l-1)^2,\\
    	M_2 &= \sum_{l=0}^{m-3}(Nl+N-1)(m-l)(m-l-2),\\
    	M_3 &= \sum_{l=0}^{m-3} (Nl+N)(m-l-2)^2.
    	\end{align*}    
    By writing $ k:=m-l-1 $, we compute directly and obtain 
    	\begin{align*}
    	M_1 &= \Bgk{N(N-2)(m-1)+\frac{(N-1)(N-2)}{2}}\sum_{k=1}^{m-1}k^2
    	       -N(N-2)\sum_{k=1}^{m-1}k^3  ,\\
    	M_2 &= q\sum_{k=1}^{m-1}k^2-N\sum_{k=1}^{m-1}k^3+N\sum_{k=1}^{m-1}k-q(m-1),\\
    	M_3 &= N(m-1)\sum_{k=1}^{m-1}k^2-N\sum_{k=1}^{m-1}k^3.
    	\end{align*}  
    It is shown by a straightforward calculation that  
    \begin{align*}
    	\#  G_0(P_1,P_2)&=M_1+M_2+M_3\\ &=\frac{(q+1)(m-1)}{12}\Big((q+1)(m-1)-2m+N+7\Big)-q(m-1),
    \end{align*}   
    concluding the first assertion of this corollary.
    
	Now we are in a position to prove the second assertion. By Lemma 6 of \cite{Yang2017Wsemi},
	\begin{align*}
	G(P_1 ) =\Big \{&
	mk+j \in \mathbb{N}  ~\Big|
	~1\leqslant j \leqslant m-1  ,   ~  0\leqslant k \leqslant q-1-Nj  \Big\},
	\end{align*}
	where $ q+1=mN $. Also, by direct computation, this yields 
	\begin{align*}
	\sum_{\alpha_1\in G(P_1) }\alpha_1 
	= \sum_{\alpha_2\in G(P_2) }\alpha_2 
	&=\sum_{j=1}^{m-1} ~\sum_{k=0}^{q-1-Nj}  (mk+j)\\
	&=\frac{ (m-1)(q-1)(2qm-m-q-1)}{12}.
	\end{align*}
	Thus it follows from Theorem \ref{thm:Gap_2point} that
	\begin{align*}
	\#  G(P_1,P_2)&=  \frac{ (m-1)(q-1)(2qm-m-q-1)}{6}-\#  G_0(P_1,P_2)\\
	&= \frac{m-1}{12}\Big((3m-1)q^2-(6m+N+5)q+3m-N-4\Big)+q(m-1),
	\end{align*}
	concluding the second assertion of this corollary.    
\end{proof}

In the following, we give some concrete examples by using our main results.
\begin{example}
	 Let $ (q,m,N)=(7,4,2) $. The quotient of the Hermitian curve becomes 
	 $ y^4=x^7+x $ of genus $ g=9 $ over $ \mathbb{F}_{49} $. 
	 It follows from Corollary \ref{cor:n=N=2} that
	 \begin{align*} 
	 \#  G_0(P_1,P_2 )= 29, \textup{ }	 
	 \#  G(P_1,P_2)=103.
	 \end{align*}
	 These results are displayed in Fig. \ref{fig:puregaps}, where the red points represent the pure gaps at $ (P_1,P_2) $ and the blue points represent the lattice points in the Weierstrass semigroup $ H(P_1,P_2) $ in the chosen area 
	 $ I:=  \{(t_1,t_2) \in \mathbb{N}_0^2 ~|~ 0\leqslant t_1,t_2 \leqslant 20  \} $. See Theorem 3 of \cite{HuYang2016Kummer}. In fact, there are $ 441 $ lattice points in $ I $ including $ 29 $ red points and $ 338 $ blue points, which indicates that the number of gaps at $ P_1,P_2 $ is $ 441-338=103 $.	These are identical with our previous values.
	  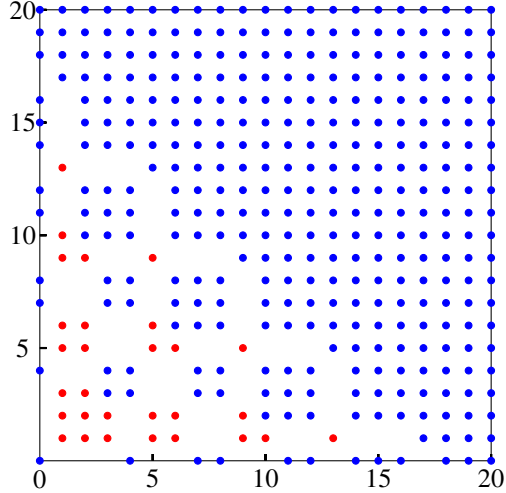
\begin{figure}[H]
	  	\centering
	  	\begin{tikzpicture}[scale=0.3]
	  	\draw [black] (0,0)--(20,0)--(20,20)--(0,20)--(0,0);
	  	
	  	\draw  node[below]{0} -- (0,0.15);
	  	\draw [thick] (5,0)--(5,.3)   node[yshift=-2ex,xshift=0ex]{5} -- (5,0);
	  	\draw [thick] (10,0)--(10,.3) node[yshift=-2ex,xshift=0ex]{10} -- (10,0);
	  	\draw [thick] (15,0)--(15,.3) node[yshift=-2ex,xshift=0ex]{15} -- (15,0);
	  	\draw [thick] (20,0)--(20,.3) node[yshift=-2ex,xshift=0ex]{20} -- (20,0);

	  	\draw [thick] (0,5)--(0.3,5)   node[yshift=0ex,xshift=-2ex]{5} -- (0,5);
	  	\draw [thick] (0,10)--(0.3,10) node[yshift=0ex,xshift=-2ex]{10} -- (0,10);
	  	\draw [thick] (0,15)--(0.3,15) node[yshift=0ex,xshift=-2ex]{15} -- (0,15);
	  	\draw [thick] (0,20)--(0.3,20) node[yshift=0ex,xshift=-2ex]{20} -- (0,20);
	  	
        \draw [red,fill] (1, 1) circle [radius=0.15]; 
        \draw [red,fill] (1, 2) circle [radius=0.15]; 
        \draw [red,fill] (1, 3) circle [radius=0.15]; 
        \draw [red,fill] (1, 5) circle [radius=0.15]; 
        \draw [red,fill] (1, 6) circle [radius=0.15]; 
        \draw [red,fill] (1, 9) circle [radius=0.15]; 
        \draw [red,fill] (1, 10) circle [radius=0.15]; 
        \draw [red,fill] (1, 13) circle [radius=0.15]; 
        \draw [red,fill] (2, 1) circle [radius=0.15]; 
        \draw [red,fill] (2, 2) circle [radius=0.15]; 
        \draw [red,fill] (2, 3) circle [radius=0.15]; 
        \draw [red,fill] (2, 5) circle [radius=0.15]; 
        \draw [red,fill] (2, 6) circle [radius=0.15]; 
        \draw [red,fill] (2, 9) circle [radius=0.15]; 
        \draw [red,fill] (3, 1) circle [radius=0.15]; 
        \draw [red,fill] (3, 2) circle [radius=0.15]; 
        \draw [red,fill] (5, 1) circle [radius=0.15]; 
        \draw [red,fill] (5, 2) circle [radius=0.15]; 
        \draw [red,fill] (5, 5) circle [radius=0.15]; 
        \draw [red,fill] (5, 6) circle [radius=0.15]; 
        \draw [red,fill] (5, 9) circle [radius=0.15]; 
        \draw [red,fill] (6, 1) circle [radius=0.15]; 
        \draw [red,fill] (6, 2) circle [radius=0.15]; 
        \draw [red,fill] (6, 5) circle [radius=0.15]; 
        \draw [red,fill] (9, 1) circle [radius=0.15]; 
        \draw [red,fill] (9, 2) circle [radius=0.15]; 
        \draw [red,fill] (9, 5) circle [radius=0.15]; 
        \draw [red,fill] (10, 1) circle [radius=0.15]; 
        \draw [red,fill] (13, 1) circle [radius=0.15]; 
 
        \draw [blue,fill] (0, 0) circle [radius=0.15]; 
        \draw [blue,fill] (0, 4) circle [radius=0.15]; 
        \draw [blue,fill] (0, 7) circle [radius=0.15]; 
        \draw [blue,fill] (0, 8) circle [radius=0.15]; 
        \draw [blue,fill] (0, 11) circle [radius=0.15]; 
        \draw [blue,fill] (0, 12) circle [radius=0.15]; 
        \draw [blue,fill] (0, 14) circle [radius=0.15]; 
        \draw [blue,fill] (0, 15) circle [radius=0.15]; 
        \draw [blue,fill] (0, 16) circle [radius=0.15]; 
        \draw [blue,fill] (0, 18) circle [radius=0.15]; 
        \draw [blue,fill] (0, 19) circle [radius=0.15]; 
        \draw [blue,fill] (0, 20) circle [radius=0.15]; 
        \draw [blue,fill] (1, 17) circle [radius=0.15]; 
        \draw [blue,fill] (1, 18) circle [radius=0.15]; 
        \draw [blue,fill] (1, 19) circle [radius=0.15]; 
        \draw [blue,fill] (1, 20) circle [radius=0.15]; 
        \draw [blue,fill] (2, 10) circle [radius=0.15]; 
        \draw [blue,fill] (2, 11) circle [radius=0.15]; 
        \draw [blue,fill] (2, 12) circle [radius=0.15]; 
        \draw [blue,fill] (2, 14) circle [radius=0.15]; 
        \draw [blue,fill] (2, 15) circle [radius=0.15]; 
        \draw [blue,fill] (2, 16) circle [radius=0.15]; 
        \draw [blue,fill] (2, 17) circle [radius=0.15]; 
        \draw [blue,fill] (2, 18) circle [radius=0.15]; 
        \draw [blue,fill] (2, 19) circle [radius=0.15]; 
        \draw [blue,fill] (2, 20) circle [radius=0.15]; 
        \draw [blue,fill] (3, 3) circle [radius=0.15]; 
        \draw [blue,fill] (3, 4) circle [radius=0.15]; 
        \draw [blue,fill] (3, 7) circle [radius=0.15]; 
        \draw [blue,fill] (3, 8) circle [radius=0.15]; 
        \draw [blue,fill] (3, 10) circle [radius=0.15]; 
        \draw [blue,fill] (3, 11) circle [radius=0.15]; 
        \draw [blue,fill] (3, 12) circle [radius=0.15]; 
        \draw [blue,fill] (3, 14) circle [radius=0.15]; 
        \draw [blue,fill] (3, 15) circle [radius=0.15]; 
        \draw [blue,fill] (3, 16) circle [radius=0.15]; 
        \draw [blue,fill] (3, 17) circle [radius=0.15]; 
        \draw [blue,fill] (3, 18) circle [radius=0.15]; 
        \draw [blue,fill] (3, 19) circle [radius=0.15]; 
        \draw [blue,fill] (3, 20) circle [radius=0.15]; 
        \draw [blue,fill] (4, 0) circle [radius=0.15]; 
        \draw [blue,fill] (4, 3) circle [radius=0.15]; 
        \draw [blue,fill] (4, 4) circle [radius=0.15]; 
        \draw [blue,fill] (4, 7) circle [radius=0.15]; 
        \draw [blue,fill] (4, 8) circle [radius=0.15]; 
        \draw [blue,fill] (4, 10) circle [radius=0.15]; 
        \draw [blue,fill] (4, 11) circle [radius=0.15]; 
        \draw [blue,fill] (4, 12) circle [radius=0.15]; 
        \draw [blue,fill] (4, 14) circle [radius=0.15]; 
        \draw [blue,fill] (4, 15) circle [radius=0.15]; 
        \draw [blue,fill] (4, 16) circle [radius=0.15]; 
        \draw [blue,fill] (4, 17) circle [radius=0.15]; 
        \draw [blue,fill] (4, 18) circle [radius=0.15]; 
        \draw [blue,fill] (4, 19) circle [radius=0.15]; 
        \draw [blue,fill] (4, 20) circle [radius=0.15]; 
        \draw [blue,fill] (5, 13) circle [radius=0.15]; 
        \draw [blue,fill] (5, 14) circle [radius=0.15]; 
        \draw [blue,fill] (5, 15) circle [radius=0.15]; 
        \draw [blue,fill] (5, 16) circle [radius=0.15]; 
        \draw [blue,fill] (5, 17) circle [radius=0.15]; 
        \draw [blue,fill] (5, 18) circle [radius=0.15]; 
        \draw [blue,fill] (5, 19) circle [radius=0.15]; 
        \draw [blue,fill] (5, 20) circle [radius=0.15]; 
        \draw [blue,fill] (6, 6) circle [radius=0.15]; 
        \draw [blue,fill] (6, 7) circle [radius=0.15]; 
        \draw [blue,fill] (6, 8) circle [radius=0.15]; 
        \draw [blue,fill] (6, 10) circle [radius=0.15]; 
        \draw [blue,fill] (6, 11) circle [radius=0.15]; 
        \draw [blue,fill] (6, 12) circle [radius=0.15]; 
        \draw [blue,fill] (6, 13) circle [radius=0.15]; 
        \draw [blue,fill] (6, 14) circle [radius=0.15]; 
        \draw [blue,fill] (6, 15) circle [radius=0.15]; 
        \draw [blue,fill] (6, 16) circle [radius=0.15]; 
        \draw [blue,fill] (6, 17) circle [radius=0.15]; 
        \draw [blue,fill] (6, 18) circle [radius=0.15]; 
        \draw [blue,fill] (6, 19) circle [radius=0.15]; 
        \draw [blue,fill] (6, 20) circle [radius=0.15]; 
        \draw [blue,fill] (7, 0) circle [radius=0.15]; 
        \draw [blue,fill] (7, 3) circle [radius=0.15]; 
        \draw [blue,fill] (7, 4) circle [radius=0.15]; 
        \draw [blue,fill] (7, 6) circle [radius=0.15]; 
        \draw [blue,fill] (7, 7) circle [radius=0.15]; 
        \draw [blue,fill] (7, 8) circle [radius=0.15]; 
        \draw [blue,fill] (7, 10) circle [radius=0.15]; 
        \draw [blue,fill] (7, 11) circle [radius=0.15]; 
        \draw [blue,fill] (7, 12) circle [radius=0.15]; 
        \draw [blue,fill] (7, 13) circle [radius=0.15]; 
        \draw [blue,fill] (7, 14) circle [radius=0.15]; 
        \draw [blue,fill] (7, 15) circle [radius=0.15]; 
        \draw [blue,fill] (7, 16) circle [radius=0.15]; 
        \draw [blue,fill] (7, 17) circle [radius=0.15]; 
        \draw [blue,fill] (7, 18) circle [radius=0.15]; 
        \draw [blue,fill] (7, 19) circle [radius=0.15]; 
        \draw [blue,fill] (7, 20) circle [radius=0.15]; 
        \draw [blue,fill] (8, 0) circle [radius=0.15]; 
        \draw [blue,fill] (8, 3) circle [radius=0.15]; 
        \draw [blue,fill] (8, 4) circle [radius=0.15]; 
        \draw [blue,fill] (8, 6) circle [radius=0.15]; 
        \draw [blue,fill] (8, 7) circle [radius=0.15]; 
        \draw [blue,fill] (8, 8) circle [radius=0.15]; 
        \draw [blue,fill] (8, 10) circle [radius=0.15]; 
        \draw [blue,fill] (8, 11) circle [radius=0.15]; 
        \draw [blue,fill] (8, 12) circle [radius=0.15]; 
        \draw [blue,fill] (8, 13) circle [radius=0.15]; 
        \draw [blue,fill] (8, 14) circle [radius=0.15]; 
        \draw [blue,fill] (8, 15) circle [radius=0.15]; 
        \draw [blue,fill] (8, 16) circle [radius=0.15]; 
        \draw [blue,fill] (8, 17) circle [radius=0.15]; 
        \draw [blue,fill] (8, 18) circle [radius=0.15]; 
        \draw [blue,fill] (8, 19) circle [radius=0.15]; 
        \draw [blue,fill] (8, 20) circle [radius=0.15]; 
        \draw [blue,fill] (9, 9) circle [radius=0.15]; 
        \draw [blue,fill] (9, 10) circle [radius=0.15]; 
        \draw [blue,fill] (9, 11) circle [radius=0.15]; 
        \draw [blue,fill] (9, 12) circle [radius=0.15]; 
        \draw [blue,fill] (9, 13) circle [radius=0.15]; 
        \draw [blue,fill] (9, 14) circle [radius=0.15]; 
        \draw [blue,fill] (9, 15) circle [radius=0.15]; 
        \draw [blue,fill] (9, 16) circle [radius=0.15]; 
        \draw [blue,fill] (9, 17) circle [radius=0.15]; 
        \draw [blue,fill] (9, 18) circle [radius=0.15]; 
        \draw [blue,fill] (9, 19) circle [radius=0.15]; 
        \draw [blue,fill] (9, 20) circle [radius=0.15]; 
        \draw [blue,fill] (10, 2) circle [radius=0.15]; 
        \draw [blue,fill] (10, 3) circle [radius=0.15]; 
        \draw [blue,fill] (10, 4) circle [radius=0.15]; 
        \draw [blue,fill] (10, 6) circle [radius=0.15]; 
        \draw [blue,fill] (10, 7) circle [radius=0.15]; 
        \draw [blue,fill] (10, 8) circle [radius=0.15]; 
        \draw [blue,fill] (10, 9) circle [radius=0.15]; 
        \draw [blue,fill] (10, 10) circle [radius=0.15]; 
        \draw [blue,fill] (10, 11) circle [radius=0.15]; 
        \draw [blue,fill] (10, 12) circle [radius=0.15]; 
        \draw [blue,fill] (10, 13) circle [radius=0.15]; 
        \draw [blue,fill] (10, 14) circle [radius=0.15]; 
        \draw [blue,fill] (10, 15) circle [radius=0.15]; 
        \draw [blue,fill] (10, 16) circle [radius=0.15]; 
        \draw [blue,fill] (10, 17) circle [radius=0.15]; 
        \draw [blue,fill] (10, 18) circle [radius=0.15]; 
        \draw [blue,fill] (10, 19) circle [radius=0.15]; 
        \draw [blue,fill] (10, 20) circle [radius=0.15]; 
        \draw [blue,fill] (11, 0) circle [radius=0.15]; 
        \draw [blue,fill] (11, 2) circle [radius=0.15]; 
        \draw [blue,fill] (11, 3) circle [radius=0.15]; 
        \draw [blue,fill] (11, 4) circle [radius=0.15]; 
        \draw [blue,fill] (11, 6) circle [radius=0.15]; 
        \draw [blue,fill] (11, 7) circle [radius=0.15]; 
        \draw [blue,fill] (11, 8) circle [radius=0.15]; 
        \draw [blue,fill] (11, 9) circle [radius=0.15]; 
        \draw [blue,fill] (11, 10) circle [radius=0.15]; 
        \draw [blue,fill] (11, 11) circle [radius=0.15]; 
        \draw [blue,fill] (11, 12) circle [radius=0.15]; 
        \draw [blue,fill] (11, 13) circle [radius=0.15]; 
        \draw [blue,fill] (11, 14) circle [radius=0.15]; 
        \draw [blue,fill] (11, 15) circle [radius=0.15]; 
        \draw [blue,fill] (11, 16) circle [radius=0.15]; 
        \draw [blue,fill] (11, 17) circle [radius=0.15]; 
        \draw [blue,fill] (11, 18) circle [radius=0.15]; 
        \draw [blue,fill] (11, 19) circle [radius=0.15]; 
        \draw [blue,fill] (11, 20) circle [radius=0.15]; 
        \draw [blue,fill] (12, 0) circle [radius=0.15]; 
        \draw [blue,fill] (12, 2) circle [radius=0.15]; 
        \draw [blue,fill] (12, 3) circle [radius=0.15]; 
        \draw [blue,fill] (12, 4) circle [radius=0.15]; 
        \draw [blue,fill] (12, 6) circle [radius=0.15]; 
        \draw [blue,fill] (12, 7) circle [radius=0.15]; 
        \draw [blue,fill] (12, 8) circle [radius=0.15]; 
        \draw [blue,fill] (12, 9) circle [radius=0.15]; 
        \draw [blue,fill] (12, 10) circle [radius=0.15]; 
        \draw [blue,fill] (12, 11) circle [radius=0.15]; 
        \draw [blue,fill] (12, 12) circle [radius=0.15]; 
        \draw [blue,fill] (12, 13) circle [radius=0.15]; 
        \draw [blue,fill] (12, 14) circle [radius=0.15]; 
        \draw [blue,fill] (12, 15) circle [radius=0.15]; 
        \draw [blue,fill] (12, 16) circle [radius=0.15]; 
        \draw [blue,fill] (12, 17) circle [radius=0.15]; 
        \draw [blue,fill] (12, 18) circle [radius=0.15]; 
        \draw [blue,fill] (12, 19) circle [radius=0.15]; 
        \draw [blue,fill] (12, 20) circle [radius=0.15]; 
        \draw [blue,fill] (13, 5) circle [radius=0.15]; 
        \draw [blue,fill] (13, 6) circle [radius=0.15]; 
        \draw [blue,fill] (13, 7) circle [radius=0.15]; 
        \draw [blue,fill] (13, 8) circle [radius=0.15]; 
        \draw [blue,fill] (13, 9) circle [radius=0.15]; 
        \draw [blue,fill] (13, 10) circle [radius=0.15]; 
        \draw [blue,fill] (13, 11) circle [radius=0.15]; 
        \draw [blue,fill] (13, 12) circle [radius=0.15]; 
        \draw [blue,fill] (13, 13) circle [radius=0.15]; 
        \draw [blue,fill] (13, 14) circle [radius=0.15]; 
        \draw [blue,fill] (13, 15) circle [radius=0.15]; 
        \draw [blue,fill] (13, 16) circle [radius=0.15]; 
        \draw [blue,fill] (13, 17) circle [radius=0.15]; 
        \draw [blue,fill] (13, 18) circle [radius=0.15]; 
        \draw [blue,fill] (13, 19) circle [radius=0.15]; 
        \draw [blue,fill] (13, 20) circle [radius=0.15]; 
        \draw [blue,fill] (14, 0) circle [radius=0.15]; 
        \draw [blue,fill] (14, 2) circle [radius=0.15]; 
        \draw [blue,fill] (14, 3) circle [radius=0.15]; 
        \draw [blue,fill] (14, 4) circle [radius=0.15]; 
        \draw [blue,fill] (14, 5) circle [radius=0.15]; 
        \draw [blue,fill] (14, 6) circle [radius=0.15]; 
        \draw [blue,fill] (14, 7) circle [radius=0.15]; 
        \draw [blue,fill] (14, 8) circle [radius=0.15]; 
        \draw [blue,fill] (14, 9) circle [radius=0.15]; 
        \draw [blue,fill] (14, 10) circle [radius=0.15]; 
        \draw [blue,fill] (14, 11) circle [radius=0.15]; 
        \draw [blue,fill] (14, 12) circle [radius=0.15]; 
        \draw [blue,fill] (14, 13) circle [radius=0.15]; 
        \draw [blue,fill] (14, 14) circle [radius=0.15]; 
        \draw [blue,fill] (14, 15) circle [radius=0.15]; 
        \draw [blue,fill] (14, 16) circle [radius=0.15]; 
        \draw [blue,fill] (14, 17) circle [radius=0.15]; 
        \draw [blue,fill] (14, 18) circle [radius=0.15]; 
        \draw [blue,fill] (14, 19) circle [radius=0.15]; 
        \draw [blue,fill] (14, 20) circle [radius=0.15]; 
        \draw [blue,fill] (15, 0) circle [radius=0.15]; 
        \draw [blue,fill] (15, 2) circle [radius=0.15]; 
        \draw [blue,fill] (15, 3) circle [radius=0.15]; 
        \draw [blue,fill] (15, 4) circle [radius=0.15]; 
        \draw [blue,fill] (15, 5) circle [radius=0.15]; 
        \draw [blue,fill] (15, 6) circle [radius=0.15]; 
        \draw [blue,fill] (15, 7) circle [radius=0.15]; 
        \draw [blue,fill] (15, 8) circle [radius=0.15]; 
        \draw [blue,fill] (15, 9) circle [radius=0.15]; 
        \draw [blue,fill] (15, 10) circle [radius=0.15]; 
        \draw [blue,fill] (15, 11) circle [radius=0.15]; 
        \draw [blue,fill] (15, 12) circle [radius=0.15]; 
        \draw [blue,fill] (15, 13) circle [radius=0.15]; 
        \draw [blue,fill] (15, 14) circle [radius=0.15]; 
        \draw [blue,fill] (15, 15) circle [radius=0.15]; 
        \draw [blue,fill] (15, 16) circle [radius=0.15]; 
        \draw [blue,fill] (15, 17) circle [radius=0.15]; 
        \draw [blue,fill] (15, 18) circle [radius=0.15]; 
        \draw [blue,fill] (15, 19) circle [radius=0.15]; 
        \draw [blue,fill] (15, 20) circle [radius=0.15]; 
        \draw [blue,fill] (16, 0) circle [radius=0.15]; 
        \draw [blue,fill] (16, 2) circle [radius=0.15]; 
        \draw [blue,fill] (16, 3) circle [radius=0.15]; 
        \draw [blue,fill] (16, 4) circle [radius=0.15]; 
        \draw [blue,fill] (16, 5) circle [radius=0.15]; 
        \draw [blue,fill] (16, 6) circle [radius=0.15]; 
        \draw [blue,fill] (16, 7) circle [radius=0.15]; 
        \draw [blue,fill] (16, 8) circle [radius=0.15]; 
        \draw [blue,fill] (16, 9) circle [radius=0.15]; 
        \draw [blue,fill] (16, 10) circle [radius=0.15]; 
        \draw [blue,fill] (16, 11) circle [radius=0.15]; 
        \draw [blue,fill] (16, 12) circle [radius=0.15]; 
        \draw [blue,fill] (16, 13) circle [radius=0.15]; 
        \draw [blue,fill] (16, 14) circle [radius=0.15]; 
        \draw [blue,fill] (16, 15) circle [radius=0.15]; 
        \draw [blue,fill] (16, 16) circle [radius=0.15]; 
        \draw [blue,fill] (16, 17) circle [radius=0.15]; 
        \draw [blue,fill] (16, 18) circle [radius=0.15]; 
        \draw [blue,fill] (16, 19) circle [radius=0.15]; 
        \draw [blue,fill] (16, 20) circle [radius=0.15]; 
        \draw [blue,fill] (17, 1) circle [radius=0.15]; 
        \draw [blue,fill] (17, 2) circle [radius=0.15]; 
        \draw [blue,fill] (17, 3) circle [radius=0.15]; 
        \draw [blue,fill] (17, 4) circle [radius=0.15]; 
        \draw [blue,fill] (17, 5) circle [radius=0.15]; 
        \draw [blue,fill] (17, 6) circle [radius=0.15]; 
        \draw [blue,fill] (17, 7) circle [radius=0.15]; 
        \draw [blue,fill] (17, 8) circle [radius=0.15]; 
        \draw [blue,fill] (17, 9) circle [radius=0.15]; 
        \draw [blue,fill] (17, 10) circle [radius=0.15]; 
        \draw [blue,fill] (17, 11) circle [radius=0.15]; 
        \draw [blue,fill] (17, 12) circle [radius=0.15]; 
        \draw [blue,fill] (17, 13) circle [radius=0.15]; 
        \draw [blue,fill] (17, 14) circle [radius=0.15]; 
        \draw [blue,fill] (17, 15) circle [radius=0.15]; 
        \draw [blue,fill] (17, 16) circle [radius=0.15]; 
        \draw [blue,fill] (17, 17) circle [radius=0.15]; 
        \draw [blue,fill] (17, 18) circle [radius=0.15]; 
        \draw [blue,fill] (17, 19) circle [radius=0.15]; 
        \draw [blue,fill] (17, 20) circle [radius=0.15]; 
        \draw [blue,fill] (18, 0) circle [radius=0.15]; 
        \draw [blue,fill] (18, 1) circle [radius=0.15]; 
        \draw [blue,fill] (18, 2) circle [radius=0.15]; 
        \draw [blue,fill] (18, 3) circle [radius=0.15]; 
        \draw [blue,fill] (18, 4) circle [radius=0.15]; 
        \draw [blue,fill] (18, 5) circle [radius=0.15]; 
        \draw [blue,fill] (18, 6) circle [radius=0.15]; 
        \draw [blue,fill] (18, 7) circle [radius=0.15]; 
        \draw [blue,fill] (18, 8) circle [radius=0.15]; 
        \draw [blue,fill] (18, 9) circle [radius=0.15]; 
        \draw [blue,fill] (18, 10) circle [radius=0.15]; 
        \draw [blue,fill] (18, 11) circle [radius=0.15]; 
        \draw [blue,fill] (18, 12) circle [radius=0.15]; 
        \draw [blue,fill] (18, 13) circle [radius=0.15]; 
        \draw [blue,fill] (18, 14) circle [radius=0.15]; 
        \draw [blue,fill] (18, 15) circle [radius=0.15]; 
        \draw [blue,fill] (18, 16) circle [radius=0.15]; 
        \draw [blue,fill] (18, 17) circle [radius=0.15]; 
        \draw [blue,fill] (18, 18) circle [radius=0.15]; 
        \draw [blue,fill] (18, 19) circle [radius=0.15]; 
        \draw [blue,fill] (18, 20) circle [radius=0.15]; 
        \draw [blue,fill] (19, 0) circle [radius=0.15]; 
        \draw [blue,fill] (19, 1) circle [radius=0.15]; 
        \draw [blue,fill] (19, 2) circle [radius=0.15]; 
        \draw [blue,fill] (19, 3) circle [radius=0.15]; 
        \draw [blue,fill] (19, 4) circle [radius=0.15]; 
        \draw [blue,fill] (19, 5) circle [radius=0.15]; 
        \draw [blue,fill] (19, 6) circle [radius=0.15]; 
        \draw [blue,fill] (19, 7) circle [radius=0.15]; 
        \draw [blue,fill] (19, 8) circle [radius=0.15]; 
        \draw [blue,fill] (19, 9) circle [radius=0.15]; 
        \draw [blue,fill] (19, 10) circle [radius=0.15]; 
        \draw [blue,fill] (19, 11) circle [radius=0.15]; 
        \draw [blue,fill] (19, 12) circle [radius=0.15]; 
        \draw [blue,fill] (19, 13) circle [radius=0.15]; 
        \draw [blue,fill] (19, 14) circle [radius=0.15]; 
        \draw [blue,fill] (19, 15) circle [radius=0.15]; 
        \draw [blue,fill] (19, 16) circle [radius=0.15]; 
        \draw [blue,fill] (19, 17) circle [radius=0.15]; 
        \draw [blue,fill] (19, 18) circle [radius=0.15]; 
        \draw [blue,fill] (19, 19) circle [radius=0.15]; 
        \draw [blue,fill] (19, 20) circle [radius=0.15]; 
        \draw [blue,fill] (20, 0) circle [radius=0.15]; 
        \draw [blue,fill] (20, 1) circle [radius=0.15]; 
        \draw [blue,fill] (20, 2) circle [radius=0.15]; 
        \draw [blue,fill] (20, 3) circle [radius=0.15]; 
        \draw [blue,fill] (20, 4) circle [radius=0.15]; 
        \draw [blue,fill] (20, 5) circle [radius=0.15]; 
        \draw [blue,fill] (20, 6) circle [radius=0.15]; 
        \draw [blue,fill] (20, 7) circle [radius=0.15]; 
        \draw [blue,fill] (20, 8) circle [radius=0.15]; 
        \draw [blue,fill] (20, 9) circle [radius=0.15]; 
        \draw [blue,fill] (20, 10) circle [radius=0.15]; 
        \draw [blue,fill] (20, 11) circle [radius=0.15]; 
        \draw [blue,fill] (20, 12) circle [radius=0.15]; 
        \draw [blue,fill] (20, 13) circle [radius=0.15]; 
        \draw [blue,fill] (20, 14) circle [radius=0.15]; 
        \draw [blue,fill] (20, 15) circle [radius=0.15]; 
        \draw [blue,fill] (20, 16) circle [radius=0.15]; 
        \draw [blue,fill] (20, 17) circle [radius=0.15]; 
        \draw [blue,fill] (20, 18) circle [radius=0.15]; 
        \draw [blue,fill] (20, 19) circle [radius=0.15]; 
        \draw [blue,fill] (20, 20) circle [radius=0.15];
 	
	  	\end{tikzpicture}
	  	
	  	\protect\caption{The pure gaps at $ (P_1,P_2) $ }
	  	\label{fig:puregaps}
	  \end{figure}
\end{example}
\begin{example}
 Let $ (q,m,N)=(8,3,3) $.  Let us consider the quotient of the Hermitian curve  
$ y^3=x^8+x $  over $ \mathbb{F}_{64} $. 
It follows from Theorem \ref{thm:puregap} that
	\begin{equation}\label{eq:exam2}
	\#  G_0(P_1,P_2,P_3)= \sum_{A=0}^{2} \left( \begin{array}{c} A+2 \\2 \end{array} \right) S_3^A (t ),
	\end{equation}
	 where $  t=\ceil{\frac{8-A}{3}} $. So it suffices to compute $ S_3^A (t ) $
	 for $ A=0,1,2 $ as showed in details here.
\begin{enumerate}
	\item Case 1: $ A=0 $. We compute from Lemma \ref{lem:Sn(t)1} that
	$ t=3,\beta=2 $ and $ \lambda_3=2 $. Since $ t> \lambda_3$, we have
	\begin{align*}
	 S_3^0 (3 ) 			 
	=\sum_{i=0}^{2} \left( \begin{array}{c} 3 \\i \end{array} \right) S_i^0(2)
	=1+3\times 1+3\times 1=7.
	\end{align*}
	\item Case 2: $ A=1 $. By Lemma \ref{lem:Sn(t)1}, 
	$ t=3,\beta=1 $ and $ \lambda_3=2 $. So
		\begin{align*}
		S_3^1 (3 ) 			 
		=\sum_{i=0}^{2} \left( \begin{array}{c} 3 \\i \end{array} \right) S_i^1(2)
		=1+3\times 1+3\times 0=4.
		\end{align*}
	\item Case 3: $ A=2 $. Again from Lemma \ref{lem:Sn(t)1}, we have
	$ t=2,\beta=3 $ and $ \lambda_3=1 $. So
	\begin{align*}
	S_3^2 (2 ) 			 
	=\sum_{i=0}^{2} \left( \begin{array}{c} 3 \\i \end{array} \right) S_i^2(1)
	=1+3\times 0+3\times 0=1.
	\end{align*}
	\end{enumerate}
	Therefore, the number of pure gaps at $  (P_1,P_2,P_3) $ is
	$ 	\#  G_0(P_1,P_2,P_3) =25 $ by \eqref{eq:exam2}.

On the other hand, by Theorem \ref{thm:Weier},
we have  	
 	\begin{align*}
 	G_0(P_1,P_2,P_3)=\left\{\begin{array}{ll}
  	 &(1, 1, 1), 
  	 (1, 1, 2), 
  	 (1, 1, 4), 
  	 (1, 1, 5), 
  	 (1, 1, 7), \\
  	 & 
  	 (1, 2, 1), 
  	 (1, 2, 2),
  	 (1, 2, 4), 
  	 (1, 4, 1), 
  	 (1, 4, 2),\\ 
  	 &(1, 4, 4), 
  	 (1, 5, 1), 
  	 (1, 7, 1), 
  	 (2, 1, 1), 
  	 (2, 1, 2), \\
  	 &(2, 1, 4), 
  	 (2, 2, 1), 
  	 (2, 4, 1), 
  	 (4, 1, 1), 
  	 (4, 1, 2), \\
  	 &(4, 1, 4), 
  	 (4, 2, 1), 
  	 (4, 4, 1), 
  	 (5, 1, 1), 
  	 (7, 1, 1)
 	\end{array}
 	\right\} .
 \end{align*}
 So the cardinality is $ \# G_0(P_1,P_2,P_3) =25 $, which coincides with our calculation.
\end{example}

\ifCLASSOPTIONcaptionsoff
  \newpage
\fi

%

\end{document}